\newtheorem{theorem}{Theorem}[section]
\newtheorem{corollary}[theorem]{Corollary}
\newtheorem{lemma}[theorem]{Lemma}
\newtheorem{definition}[theorem]{Definition}
\newtheorem{example}[theorem]{Example}
\newtheorem{proposition}[theorem]{Proposition}
\numberwithin{equation}{section}
\let\set\mathbb
\def\lc{\operatorname{lc}}
\def\shift{\text{S}}
\def\reduction{\operatorname{red}}
\def\cont{\operatorname{cont}}
\def\prim{\operatorname{prim}}
\title{Constructing Minimal Telescopers for Rational Functions\\ in
  Three Discrete Variables}
\author{Shaoshi Chen$^1$, Qing-Hu Hou$^2$, Hui Huang$^3\footnote{Corresponding author.}$,
  George Labahn$^4$, Rong-Hua Wang$^5$
  }
\date{\small
  $^1$ KLMM, Academy of Mathematics and Systems Science, Chinese Academy of Sciences,\\
  Beijing, 100190, China\\
  $^2$ Center for Applied Mathematics, Tianjin University,
  Tianjin, 300072, China\\
  $^3$ School of Mathematical Sciences, Dalian University of Technology,
  Dalian, Liaoning, 116024, China\\
  $^4$  Cheriton School of Computer Science, University of Waterloo,
  Waterloo, ON, N2L 3G1, Canada\\
  $^5$ School of Mathematical Sciences, Tiangong University,
  Tianjin, 300387, China\\
  {\sf schen@amss.ac.cn}, {\sf qh$\_$hou@tju.edu.cn},
  {\sf huanghui@dlut.edu.cn}\\
  {\sf glabahn@uwaterloo.ca},
  {\sf wangronghua@tiangong.edu.cn}
}
\begin{document}
\maketitle

\begin{center}
Dedicated to Professor Ziming Li on the occasion of his 60th birthday.
\end{center}

\bigskip
\begin{abstract}
  We present a new algorithm for constructing minimal telescopers for
  rational functions in three discrete variables. This is the first
  discrete reduction-based algorithm that goes beyond the bivariate
  case. The termination of the algorithm is guaranteed by a known
  existence criterion of telescopers. Our approach has the important
  feature that it avoids the potentially costly computation of
  certificates. Computational experiments are also provided so as to
  illustrate the efficiency of our approach.
  
  \bigskip\noindent
  {\em Keywords}: Creative telescoping; Abramov reduction; Symbolic summation
  
  \bigskip\noindent
  Mathematics Subject Classification 2010: 33F10, 68W30
\end{abstract}

\section{Introduction}\label{SEC:intro}
Creative telescoping \cite{Zeil1990b,Zeil1991} is a powerful tool used to find closed form
solutions for definite sums and definite integrals.  The method
constructs a recurrence (resp.\ differential) equation satisfied by
the definite sum (resp.\ integral) with closed form solutions over a
specified domain resulting in formulas for the sum or integral.
Methods for finding such closed form solutions are available for many
special functions, with examples given
in~\cite{Abra1989,Petk1992,Abra1995b,APP1998,BaPe1999,ClvH2006,CvHL2010,vHoe2017,ABPS2021}.
Even when no closed form exists the method of creative  telescoping often remains
useful. For example the resulting recurrence or differential equation
enables one to determine asymptotic expansions and derive other
interesting facts about the original sum or integral.

In the case of summation, specialized to the trivariate case, in order
to compute a sum of the form
$$\sum_{y=a_1}^{b_1}\sum_{z=a_2}^{b_2}f(x,y,z),$$ the main task of
creative telescoping consists in finding $c_0,\dots, c_\rho$, rational
functions (or polynomials) in $x$, not all zero, and two functions
$g(x,y,z),h(x,y,z)$ in the same class of functions as $f(x,y,z)$ such that
\begin{equation}\label{EQ:ct}
c_0 f+ c_1 \shift_x(f) + \cdots + c_\rho\shift_x^\rho(f)
= (\shift_{y}(g)-g) +(\shift_{z}(h)-h),
\end{equation}
where $\shift_x$, $\shift_y$ and $\shift_z$ denote shift operators in
$x$, $y$ and $z$, respectively. The number $\rho$ may or may not be
part of the input.  If $c_0,c_1,\dots, c_\rho$ and $g,h$ are as above, then
$L = c_0 + c_1\shift_x+\cdots + c_\rho\shift_x^\rho$ is called a {\em telescoper}
for $f$ and $(g,h)$ is a {\em certificate} for $L$.

The utility of creative telescoping is best demonstrated by examples.
Suppose we want to find a closed form of the
following multiple sum
\[
\sum_{y=0}^{x}\sum_{z=0}^{x} f(x,y,z)
\quad \text{with}\ f(x,y,z) = \frac{2y-x}{(x+y+1)(-2x+y-1)(x+z+1)}.
\]
To this end, the method first constructs a telescoper
$L = \shift_x{ }-1$ for $f$ and a corresponding certificate
\[
(g, h) =
\left(
\frac{8x^2{-}2xy{-}y^2{+}19x{-}2y{+}11}
     {(x{+}y{+}1)({-}2x{+}y{-}3)({-}2x{+}y{-}2)(x{+}z{+}1)},
\frac{{-}x{+}2y{-}1}{(x{+}y{+}2)({-}2x{+}y{-}3)(x{+}z{+}1)}
\right)
\]
such that
\begin{equation*}
L(f) = f(x+1,y,z)-f(x,y,z) = g(x,y+1,z)-g(x,y,z)+h(x,y,z+1)-h(x,y,z).
\end{equation*}
Summing on both sides over $y,z$ from zero to $x$, and applying the
idea of telescoping to $g$ for $y$ and to $h$ for $z$, respectively,
yield
\begin{align*}
  &\quad\,
  \sum_{y=0}^x\sum_{z=0}^x f(x+1,y,z)-\sum_{y=0}^x\sum_{z=0}^x f(x,y,z)\\
  &= \sum_{z=0}^x\big(g(x,x+1,z)-g(x,0,z)\big)
  + \sum_{y=0}^x\big(h(x,y,x+1)-h(x,y,0)\big).
\end{align*}
Employing the notation $F(x) = \sum_{y=0}^{x}\sum_{z=0}^{x}f(x,y,z)$,
along with a range match-up, one obtains
\begin{align}
F(x+1) - F(x) &= \sum_{z=0}^x\big(g(x,x+1,z)-g(x,0,z)+f(x+1,x+1,z)\big)
\nonumber \\
&+ \sum_{y=0}^x\big(h(x,y,x+1)-h(x,y,0)+f(x+1,y,x+1)\big)
+ f(x+1,x+1,x+1)
\nonumber\\
&=\sum_{z=0}^x\frac{x+1}{(x+2)(2x+3)(x+z+1)(x+z+2)}
\nonumber\\
&+\sum_{y=0}^x\frac{x-2y+1}{2(x+1)(2x+3)(x+y+2)(-2x+y-3)}
-\frac{x + 1}{(x + 2)(2x + 3)^2},
\label{EQ:ctrec}
\end{align}
where the right-hand side merely involves single sums and thus the
problem is now reduced to finding closed forms of these sums. Applying
the method of creative telescoping (specialized to the bivariate case)
again, one finds that the first single sum is equal to $1/(2(x+2)(2x+3))$,
while the second sum admits a first-order linear recurrence equation,
which yields the closed form $-1/(2(x + 2)(2x + 3)^2)$. A direct
calculation confirms that the right-hand side of \eqref{EQ:ctrec}
collapses to zero after expansion, that is, $F(x+1) - F(x)=0$.
Together with the initial value $F(0) = 0$, one then concludes that
$\sum_{y=0}^{x}\sum_{z=0}^{x} f(x,y,z) = 0$.

Over the past two decades, a number of generalizations and refinements
of creative telescoping have been developed. At the present time
the reduction-based approach has gained high support as it is both
efficient in practice and has the important feature of being able to find a
telescoper for a given function without necessarily computing a corresponding
certificate. This is desirable in a typical situation where only the
telescoper is of interest and its size is much smaller than the size
of the certificate. Even when a certificate is needed, the approach also
allows one to express it as an unnormalized sum so that the summands
are concatenated symbolically without actually calculating the sum.
Such an expression can be more easily specialized at end points of the
summation range than the expanded certificate, and thus turns out to be
useful in many applications.

The reduction-based approach was first developed in the differential
case for bivariate rational functions~\cite{BCCL2010}, and later
generalized to rational functions in several variables~\cite{BLS2013},
to hyperexponential functions~\cite{BCCLX2013}, to algebraic
functions~\cite{CKK2016} and to D-finite
functions~\cite{CvHKK2018,vdHo2021,BCLS2018}.  In the shift case a
reduction-based approach was developed for hypergeometric
terms~\cite{CHKL2015,Huan2016} and to multiple binomial
sums~\cite{BLS2017} (a subclass of the sums of hypergeometric
terms).

In the case of discrete functions having more than two variables no
complete reduction-based creative telescoping algorithm has been known
so far.  Having such an algorithm would allow us to tackle many
multiple summations from applications more efficiently. However, it is
quite challenging to develop an algorithm once for all. As a first step,
in the present paper we address the most fundamental case, namely when
$f, g, h$ in \eqref{EQ:ct} are all rational functions in $x,y,z$.
This is also a natural follow up to the recent work~\cite{CHLW2016, CDZ2019, CDWZ2021}
on the existence problem of telescopers for rational
functions in three variables.

The basic idea of the general reduction-based approach, formulated for
the shift trivariate rational case, is as follows. Let $\set K$ be a
field of characteristic zero. Assume that there is a
$\set K(x)$-linear map $\reduction(\cdot): \set K(x,y,z)\rightarrow \set
K(x,y,z)$ with the property that for all $f\in \set K(x,y,z)$, there
exist $g,h\in \set K(x,y,z)$ such that
$f-\reduction(f)=(\shift_y(g)-g) + (\shift_z(h)-h)$, that is,
$f-\reduction(f)$ is summable with respect to~$y,z$,
and $\reduction(f)$ is minimal in certain sense. In other words,
$\reduction(f)$ indicates the \lq\lq minimum\rq\rq\ adjustments needed
for $f$ to become summable with respect to~$y,z$, which apparently
excludes the most trivial case of $\reduction(f)=f$. Such a map is
called a {\em reduction} with $\reduction(f)$ considered as a
{\em remainder} of $f$ with respect to the reduction
$\reduction(\cdot)$. Then in order to find a telescoper for~$f$, we
can iteratively compute $\reduction(f), \reduction(\shift_x(f)),
\reduction(\shift_x^2(f)), \dots$ until we find a nontrivial linear
dependence over $\set K(x)$.  Once we have such a dependence,
say
\[
c_0\reduction(f) + \cdots + c_\rho\reduction(\shift_x^\rho(f))=0
\]
for $c_i \in \set K(x)$ not all zero, then by linearity,
$\reduction(c_0f+ \cdots + c_\rho \shift_x^\rho(f))=0$, that is, $c_0f
+ \cdots + c_\rho\shift_x^\rho(f) = (\shift_y(g)-g) + (\shift_z(h)-h)$
for some $g,h\in \set K(x,y,z)$. This yields a telescoper $c_0+\cdots
+ c_\rho\shift_x^\rho$ for $f$.

To guarantee the termination of the above process, one possible way is
to show that, for every summable function $f$, we have $\reduction(f)=0$.
If this is the case and $L = c_0 + \cdots + c_\rho \shift_x^\rho$ is a
telescoper for $f$, then $L(f)$ is summable by the definition.  So
$\reduction(L(f)) = 0$, and again by the linearity, $\reduction(f),
\dots, \reduction(\shift_x^\rho(f))$ are linearly dependent over $\set K(x)$.
This means that we will not miss any telescoper and that the method
will terminate provided that a telescoper is known to
exist.  This approach was taken in \cite{CHKL2015}.  It requires us to
know exactly under what kind of conditions a telescoper exists,
so-called the {\em existence problem of telescopers}, and, when these
conditions are fulfilled, then it is guaranteed to find one of minimal
order $\rho$. Such existence problems have been well studied in the case
of bivariate hypergeometric terms~\cite{Abra2003} and more recently in the
trivariate rational case~\cite{CHLW2016, CDZ2019, CDWZ2021}.

A second, alternate way to ensure termination, used for example in
\cite{BCCL2010,BCCLX2013}, is to show that, for a given function $f$,
the remainders $\reduction(f), \reduction(\shift_x(f)),
\reduction(\shift_x^2(f)),\dots$ form a finite-dimensional
$\set K(x)$-vector space. Then, as soon as $\rho$ exceeds this finite
dimension, one can be sure that a telescoper of order at most $\rho$
will be found.  This also implies that every bound for the dimension
gives rise to an upper bound for the minimal order of
telescopers. This approach provides an independent proof for the
existence of a telescoper. However, since such an upper order bound is
only of theoretical interest and will not affect the practical
efficiency of the algorithms, in this paper we will confine ourselves
with the first approach for termination and leave the second approach
for future research.

Our starting point is thus to find a suitable reduction for trivariate
rational functions. In particular we present a reduction
$\reduction(\cdot)$ which satisfies the following properties: (i)
$\reduction(f) = 0$ whenever $f\in \set K(x,y,z)$ is summable and (ii)
$\reduction(f)$ is minimal in certain sense.  One issue with this
reduction, similar to that encountered in the bivariate hypergeometric
case \cite{CHKL2015}, is the difficulty that $\reduction(\cdot)$ is
not a $\set K(x)$-linear map in
general. To overcome this we follow the ideas of \cite{CHKL2015}.
Namely, we introduce the idea of congruences modulo summable rational
functions and show that $\reduction(\cdot)$ becomes $\set K(x)$-linear when
it is viewed as a residue class.  Using the existence criterion of
telescopers established in~\cite{CHLW2016}, we are then able to design
a creative telescoping algorithm from $\reduction(\cdot)$ as described
in the previous paragraphs.

The remainder of the paper proceeds as follows. The next section gives
some preliminary materials needed for this paper, particularly a review
of a reduction method due to Abramov. In Section~\ref{SEC:biabred} we
extend Abramov's reduction method to the bivariate case by
incorporating a primary reduction. In Section~\ref{SEC:remlinearity}
we show that the reduction remainders introduced in the previous
section are well-behaved with respect to taking linear combinations,
followed in Section~\ref{SEC:rct} by a new algorithm for constructing
telescopers for trivariate rational functions based on the bivariate
extension of Abramov's reduction method. In Section~\ref{SEC:test} we
provide some experimental tests of our new algorithm. The paper ends
with some topics for future research.

\section{Preliminaries}\label{SEC:prel}
Throughout the paper we let $\set K$ denote a field of characteristic
zero, with $\set F = \set K(x)$ and $\set F(y,z)$ being the field of
rational functions in $y,z$ over~$\set F$.
Choosing the pure lexicographic order $y\prec z$, we say that a
polynomial in $\set F[y,z]$ is {\em monic} if its highest term
with respect to~$y,z$ has coefficient one. For a nonzero polynomial
$p\in \set F[y,z]$, its degree and leading coefficient with respect to
the variable $v\in \{y,z\}$ are denoted by $\deg_v(p)$ and $\lc_v(p)$,
respectively. We will follow the convention that $\deg_v(0)=-\infty$.

We let $\sigma_y$ and
$\sigma_z$ be the automorphisms over $\set F(y,z)$, which, for any
$f\in \set F(y,z)$, are defined by
\[\sigma_y(f(x,y,z))=f(x,y+1,z) \quad \text{and} \quad
\sigma_z(f(x,y,z))=f(x,y,z+1).\]
Let $G=\langle \sigma_y,\sigma_z\rangle$ be the free abelian
multiplicative group generated by $\sigma_y, \sigma_z$.
The application of an element $\tau = \sigma_y^\alpha\sigma_z^\beta$
in $G$ to a rational function $f\in \set F(y,z)$ is defined as
\[
\tau(f(x,y,z)) = \sigma_y^\alpha\sigma_z^\beta(f(x,y,z))
= f(x,y+\alpha,z+\beta).
\]
For any $\tau\in G$, we say that a polynomial $p\in
\set F[y,z]$ is {\em $\tau$-free} if $\gcd(p,\tau^\ell(p))=1$
for all nonzero $\ell\in \set Z$. A rational function $f\in \set
F(y,z)$ is called {\em $\tau$-summable} if $f=\tau(g)-g$ for some
$g\in \set F(y,z)$. The {\em $\tau$-summability problem} is then to
decide whether a given rational function in $\set F(y,z)$ is
$\tau$-summable or not. Rather than merely giving a negative answer in
case the function is not $\tau$-summable, one could instead seek
solutions for a more general problem, namely the
{\em $\tau$-decomposition problem}, with the intent to make the
nonsummable part as small as possible. Precisely speaking, the
$\tau$-decomposition problem, for a given rational function $f\in \set
F(y,z)$, asks for an additive decomposition of the form $f =
\tau(g)-g+r$, where $g,r\in \set F(y,z)$ and $r$ is minimal in certain
sense such that $f$ would be $\tau$-summable if and only if $r=0$. It
is readily seen that any solution to the decomposition problem tackles
the corresponding summability problem as well.

In the case where $\tau=\sigma_y$, the decomposition problem was first
solved by Abramov in~\cite{Abra1975} with refined algorithms in
\cite{Abra1995a,Paul1995,AbPe2001a,GGSZ2003,Schn2007}. All these algorithms can
be viewed as discrete analogues of the Ostrogradsky-Hermite reduction
for rational integration (and beyond).  We refer to any of these algorithms
restricted to the rational case as the
Abramov reduction.

\begin{theorem}[Abramov reduction]\label{THM:abred}
  Let $f$ be a rational function in $\set F(y,z)$. Then the Abramov
  reduction finds $g\in \set F(y,z)$ and $a,b\in \set F[y,z]$ with
  $\deg_y(a)<\deg_y(b)$ and $b$ being $\sigma_y$-free such that
  \begin{equation*}\label{EQ:abdecomp}
  f = \sigma_y(g)-g + \frac{a}{b}.
  \end{equation*}
  Moreover, if $f$ admits such a decomposition then
  \begin{itemize}
  \item $f$ is $\sigma_y$-summable if and only if $a=0$;
  \item $b$ has the lowest possible degree in $y$ when $\gcd(a,b)=1$. That is,
    if there exist a second $g'\in \set F(y,z)$ and
    $a',b'\in \set F[y,z]$ such that $f = \sigma_y(g')-g'+a'/b'$, then
    $\deg_y(b')\geq \deg_y(b)$.
  \end{itemize}
\end{theorem}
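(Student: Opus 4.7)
The plan is to reduce the theorem to the classical univariate shift Hermite--Abramov reduction. I identify $\set F(y,z) = C(x,y,z)$ with the rational function field $K(y)$ over $K := \set F(z) = C(x,z)$, so that $\sigma_y$ becomes the standard unit shift $y \mapsto y+1$ acting trivially on $K$. Under this identification the theorem is a direct restatement, over $K(y)$, of Abramov's classical one-variable reduction from \cite{Abra1975,Abra1995a,Paul1995,AbPe2001a,GGSZ2003}. For the existence of the decomposition I would invoke one of those references directly; in brief, the algorithm factors the $y$-denominator over $K$ into $\sigma_y$-orbits of irreducibles, uses the identity $\sigma_y(1/u^e) - 1/u^e = 1/\sigma_y(u)^e - 1/u^e$ repeatedly to collect every orbit at a single representative, then performs Hermite-style multiplicity reduction to bring each orbit down to its squarefree part; polynomial division then yields the claimed $a/b$ with $\deg_y a < \deg_y b$ and $b$ being $\sigma_y$-free.

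For the summability criterion, one direction is immediate; for the other I would assume $f = \sigma_y(G) - G$ in addition to $f = \sigma_y(g) - g + a/b$ and set $h := G - g$, so it suffices to show that any rational function of the form $a/b$ with $\deg_y a < \deg_y b$ and $b$ shift-free cannot be $\sigma_y$-summable unless $a = 0$. Expanding the hypothetical certificate $h$ in partial fractions over $K(y)$ and tracking the contribution of each irreducible pole to $\sigma_y(h) - h$, shift-freeness of $b$ forces the $\sigma_y$-orbit of any pole of $h$ that meets a factor of $b$ to be contained entirely in the pole set of $h$; since such orbits are infinite over a field of characteristic zero, this is incompatible with $h \in K(y)$ unless the residue at each irreducible factor of $b$ vanishes, which forces $a = 0$.

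Minimality of $\deg_y b$ then follows: given a second decomposition $f = \sigma_y(g') - g' + a'/b'$ of the claimed shape, the difference $a/b - a'/b'$ is $\sigma_y$-summable, and the same pole-orbit analysis applied after clearing to a common denominator pins down the shift-free denominator in lowest terms up to associates, whence $\deg_y b' \ge \deg_y b$ when $\gcd(a,b) = 1$. The main technical obstacle I anticipate is precisely the pole-orbit bookkeeping inside the summability criterion: correctly tracking how $\sigma_y$ permutes the irreducible factors in the partial-fraction expansion of $h$, and using shift-freeness of $b$ to preclude all cancellations, is the single step that really drives the conclusions; everything else is routine reorganization of standard material.
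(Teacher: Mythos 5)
The paper does not actually prove this theorem: it is Abramov's classical decomposition, quoted from \cite{Abra1975} and its refinements, so your sketch is being measured against the standard literature argument rather than against anything in the text. Your overall route --- viewing $\set F(y,z)$ as $K(y)$ with $K=\set F(z)$, collecting each $\sigma_y$-orbit of irreducible denominator factors at a single representative via telescoping identities, and then proving the summability criterion and the degree minimality by tracking how $\sigma_y$ permutes the poles of a hypothetical certificate --- is exactly the standard one, and the pole-orbit bookkeeping you single out as the crux is indeed where all the content lies. Two caveats on that part: the minimality claim quantifies over \emph{arbitrary} decompositions $f=\sigma_y(g')-g'+a'/b'$ (no shift-freeness or properness is assumed of $b'$), so the orbit argument must show that $b'$ contains, somewhere in the orbit of each irreducible factor $u^k$ of the reduced $b$, a factor of multiplicity at least $k$; and translating the output back from $K[y]$ into the ring $\set F[y,z]$ demanded by the statement needs a word about $z$-content, since a factor lying in $\set F[z]$ is fixed by $\sigma_y$ and would violate the paper's literal definition of $\sigma_y$-free.

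The one step that would genuinely fail is the ``Hermite-style multiplicity reduction to bring each orbit down to its squarefree part.'' No such reduction exists in the shift setting: $1/y^2$ is not of the form $\sigma_y(g)-g+a/b$ with $b$ squarefree, as one sees by comparing order-two principal parts along the orbit of $y$. The correct discrete analogue of Hermite reduction lowers the \emph{dispersion} (the number of distinct shifts of an irreducible occurring in the denominator), not the multiplicity; that is precisely what your orbit-collection step already accomplishes. Fortunately the theorem never asks for a squarefree denominator --- by the paper's definition, $\sigma_y$-freeness only requires $\gcd(b,\sigma_y^\ell(b))=1$ for nonzero $\ell$, which $y^2$ satisfies --- so the offending clause should simply be deleted rather than repaired. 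With it removed, and with the polynomial quotient from your final division absorbed into $g$ (every polynomial in $y$ is $\sigma_y$-summable), the existence part of your argument is complete, and the remaining two assertions follow from the orbit analysis as you describe.
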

In view of the above theorem, we introduce the following definition.
\begin{definition}\label{DEF:yremainder}
A rational function $a/b\in\set F(y,z)$ with $a,b\in\set F[y,z]$ and $b\neq 0$ is called
a {\em $\sigma_y$-remainder} if $\deg_y(a)<\deg_y(b)$ and $b$ is $\sigma_y$-free.
\end{definition}
It is evident from
Theorem~\ref{THM:abred} that any nonzero $\sigma_y$-remainder is not
$\sigma_y$-summable.

Generalizing to the bivariate case, we consider the
$(\sigma_y,\sigma_z)$-summability problem of deciding whether a given
rational function $f\in \set F(y,z)$ can be written in the form $f =
\sigma_y(g)-g+\sigma_z(h)-h$ for $g,h\in \set F(y,z)$. If such a form
exists, we say that $f$ is {\em $(\sigma_y,\sigma_z)$-summable},
abbreviated as summable in certain instances. The
$(\sigma_y,\sigma_z)$-decomposition problem is then to decompose a
given rational function $f\in \set F(y,z)$ into the form $$f =
\sigma_y(g)-g+\sigma_z(h)-h+r,$$ where $g,h,r\in \set F(y,z)$ and $r$
is minimal in certain sense. Moreover, $f$ is
$(\sigma_y,\sigma_z)$-summable if and only if $r = 0$.

Recall \cite{AbLe2002} that an irreducible polynomial $f\in \set F[y,z]$ is called
{\em $(y,z)$-integer linear} over the field $\set F$ if it can be written in the
form $f = p(\alpha y+ \beta z)$ for a polynomial $p(Z)\in \set F[Z]$ and
integers $\alpha, \beta\in \set Z$. One may assume without loss of
generality that $\beta \geq 0$ and $\alpha,\beta$ are coprime. A
polynomial in $\set F[y,z]$ is called {\em $(y,z)$-integer linear}
over $\set F$ if all its irreducible factors are $(y,z)$-integer
linear over $\set F$ while a rational function in $\set F(y,z)$ is
called {\em $(y,z)$-integer linear} over $\set F$ if its numerator and
denominator are both $(y,z)$-integer linear over~$\set F$.  For
simplicity, we just say a rational function is $(y,z)$-integer linear
over $\set F$ of $(\alpha,\beta)$-type if it is equal to $p(\alpha y+
\beta z)$ for some $p(Z)\in \set F(Z)$ and $\alpha, \beta$ are coprime
integers with $\beta\geq 0$. Algorithms for determining integer
linearity can be found in \cite{AbLe2002,LiZh2013,GHLZ2019}.

In the context of creative telescoping, we will also need to consider
the variable $x$ and the automorphism $\sigma_x$, which
for every $f\in \set F(y,z)$ maps $f(x,y,z)$ to $f(x+1,y,z)$.
Two polynomials $p,q\in \set K[x,y,z]$ are called
{\em $(x,y,z)$-shift equivalent}, denoted by $p\sim_{x,y,z} q$, if there
exist three integers $\ell,m,n$ such that $p =
\sigma_x^\ell\sigma_y^m\sigma_z^n(q)$. We generalize this notion to the
domain $\set F[y,z]$ by saying that two polynomials $p,q\in \set F[y,z]$
are $(x,y,z)$-shift equivalent if $p=\sigma_x^\ell\sigma_y^m\sigma_z^n(q)$
for integers $\ell,m,n$. When $\ell =0$ then $p$ is also called
{\em $(y,z)$-shift equivalent} to $q$, denoted by $p \sim_{y,z} q$.
Clearly, both $\sim_{x,y,z}$ and $\sim_{y,z}$ are equivalence relations.

Let $\set F(y,z)[\shift_x,\shift_y,\shift_z]$ be the ring of linear
recurrence operators in $x,y,z$ over $\set F(y,z)$. Here
$\shift_x,\shift_y,\shift_z$ commute with each other, and $\shift_vf
= \sigma_v(f)\shift_v$ for any $f\in \set F(y,z)$ and $v\in\{x,y,z\}$.
The application of an operator
$P = \sum_{i,j,k\geq 0}p_{ijk}\,\shift_x^i\,\shift_y^j\,\shift_z^k$ in
$\set F(y,z)[\shift_x,\shift_y,\shift_z]$ to a rational function $f \in
\set F(y,z)$ is then defined as
\[P(f) = \sum_{i,j,k\geq 0}p_{ijk}\sigma_x^i\sigma_y^j\sigma_z^k(f).\]
\begin{definition}\label{DEF:telescoper}
  Let $f$ be a rational function in $\set F(y,z)$. A nonzero linear
  recurrence operator $L\in \set F[\emph\shift_x]$ is called a {\em telescoper}
  for $f$ if $L(f)$ is $(\sigma_y,\sigma_z)$-summable, or equivalently,
  if there exist rational functions $g,h\in \set F(y,z)$ such that
  \[
  L(f) = (\emph\shift_y-1)(g) + (\emph\shift_z-1)(h),
  \]
  where $1$ denotes the identity map of $\set F(y,z)$.  We call
  $(g,h)$ a corresponding certificate for $L$. The {\em order}
  of a telescoper is defined to be its degree in $\emph\shift_x$. A
  telescoper of minimal order for $f$ is called a
  {\em minimal telescoper} for $f$.
\end{definition}

\section{Bivariate extension of the Abramov reduction}\label{SEC:biabred}
In this section, we demonstrate how to solve the bivariate
decomposition problem (and thus also the bivariate summability
problem) using the Abramov reduction. To this end, let us first
recall some key results on the bivariate summability
from~\cite{HoWa2015}.

Based on a theoretical criterion given in \cite[Theorem~3.7]{ChSi2014}, Hou and
Wang~\cite{HoWa2015} developed an algorithm for solving the
$(\sigma_y,\sigma_z)$-summability problem.  The proof found in
\cite[Lemma 3.1]{HoWa2015} contains a reduction algorithm with inputs
and outputs specified as follows.

\smallskip\noindent {\bf Primary reduction.}
Given a rational function $f\in \set F(y,z)$, compute rational
functions $g, h, r \in \set F(y,z)$ such that
\begin{equation}\label{EQ:hwred}
  f = (\shift_y-1)(g) + (\shift_z-1)(h)  + r
\end{equation}
and $r$ is of the form
\begin{equation}\label{EQ:rform}
  r = \sum_{i=1}^m\sum_{j=1}^{n_i}\frac{a_{ij}}{b_{ij}d_i^j}
\end{equation}
with $m,n_i\in \set N$, $a_{ij}, d_i\in \set F[y,z]$ and $b_{ij}\in
\set F[y]$ satisfying that
\begin{itemize}
\item $\deg_z(a_{ij})<\deg_z(d_i)$,
\item $d_i$ is a monic irreducible factor of the denominator of $r$
  and of positive degree in $z$,
\item $d_i\nsim_{y,z} d_{i'}$ whenever $i\neq i'$ for $1\leq i,i'\leq m$.
\end{itemize}

Let $f$ be a rational function in $\set F(y,z)$ and assume that
applying the primary reduction to $f$ yields
\eqref{EQ:hwred}. Deciding if $f$ is $(\sigma_y,\sigma_z)$-summable
then amounts to checking the summability of $r$. By
\cite[Lemma~3.2]{HoWa2015}, this is equivalent to checking the
summability of each simple fraction $a_{ij}/(b_{ij}d_i^j)$. Thus the
bivariate summability problem for a general rational function is
reduced to determining the summability of several simple fractions, 
which in turn can be addressed by the following.
\begin{theorem}[{\cite[Theorem~3.3]{HoWa2015}}]\label{THM:hwthm3.3}
  Let $f = a/(b\, d^j)$, where $a, d\in \set F[y,z]$, $b\in \set F[y]$,
  $j\in \set N\setminus\{0\}$ with $d$ irreducible and $0\leq \deg_z(a)
  < \deg_z(d)$. Then $f$ is $(\sigma_y,\sigma_z)$-summable if and only if
  \begin{itemize}	
  \item[(i)] $d$ is $(y,z)$-integer linear over~$\set F$ of
    $(\alpha,\beta)$-type,

  \item[(ii)] there exists $q\in \set F(y)[z]$ with
    $\deg_z(q)<\deg_z(d)$ so that
    \begin{equation}\label{EQ:arelation}
      \frac{a}{b}=\sigma_y^\beta\sigma_z^{-\alpha}(q)-q.
    \end{equation}
  \end{itemize}
\end{theorem}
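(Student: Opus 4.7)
The plan is to handle the two directions separately. The ``if'' direction reduces to an operator identity, while the ``only if'' direction requires a partial fraction analysis along the $(y,z)$-shift orbit of $d$.

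For the ``if'' direction, assume (i) and (ii). Since $d = p(\alpha y + \beta z)$ with $\gcd(\alpha,\beta) = 1$, a direct calculation shows that $T := \sigma_y^\beta \sigma_z^{-\alpha}$ fixes $\alpha y + \beta z$ and hence fixes $d$ and $d^j$. Condition (ii) then gives
\[ f = \frac{a}{b\,d^j} = \frac{T(q) - q}{d^j} = T\!\left(\frac{q}{d^j}\right) - \frac{q}{d^j}. \]
To convert this into $(\sigma_y, \sigma_z)$-summability, use the operator decomposition $T - 1 = (\sigma_y^\beta - 1)\sigma_z^{-\alpha} + (\sigma_z^{-\alpha} - 1)$ together with the standard factorization $\sigma_y^\beta - 1 = (\sigma_y - 1)(1 + \sigma_y + \cdots + \sigma_y^{\beta - 1})$ (and the analogous identity for $\sigma_z^{-\alpha} - 1$, with a sign adjustment when $\alpha < 0$). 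Applied to $q/d^j$, these produce explicit $g, h$ witnessing summability.

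For the ``only if'' direction, suppose $f = (\sigma_y - 1)(g) + (\sigma_z - 1)(h)$ for some $g, h \in \set F(y,z)$. After taking partial fractions of $g$ and $h$ with respect to $z$ and grouping the irreducible $z$-factors of their denominators into $(y,z)$-shift equivalence classes, one may reduce to the case where those denominators are supported on the orbit $O(d) := \{\sigma_y^i \sigma_z^k(d) : (i,k) \in \set Z^2\}$ alone, since distinct classes contribute independently to the partial fractions of $f$ but $d$ is the only $z$-dependent irreducible factor appearing there. Now consider the stabilizer subgroup
\[ H := \{(\beta, -\alpha) \in \set Z^2 : \sigma_y^\beta \sigma_z^{-\alpha}(d) \in \set F \cdot d\}. \]
If $H = \{(0,0)\}$, then a two-dimensional analogue of Abramov's telescoping cascade on the finitely supported partial-fraction coefficients over the free orbit forces $a = 0$, contradicting the hypothesis. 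Hence $H$ is nontrivial; choosing a primitive generator yields coprime $\alpha, \beta$ with $\beta \geq 0$ such that $\sigma_y^\beta \sigma_z^{-\alpha}(d) = c\, d$ for some $c \in \set F$. Matching leading coefficients forces $c = 1$, and the invariance of $d$ under $y \mapsto y + \beta$, $z \mapsto z - \alpha$ over $\set F[y,z]$ then compels $d$ to be a polynomial in $\alpha y + \beta z$, establishing (i).

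With (i) in hand, the associate classes in $O(d)$ are parametrized by $\set Z^2 / H \cong \set Z$, collapsing the summability problem to a one-dimensional one. Reversing the operator decomposition from the ``if'' direction, $(\sigma_y - 1)(g) + (\sigma_z - 1)(h)$ may be rewritten as $(T - 1)(Q)$ for some $Q \in \set F(y,z)$ supported on the coset of $d$, where $T = \sigma_y^\beta \sigma_z^{-\alpha}$. Extracting the $d^{-j}$-component of this equation and clearing $d^j$ via $T(d) = d$ yields $a/b = T(q) - q$ with $q$ of $z$-degree less than $\deg_z(d)$, giving (ii). The main obstacle is this last collapse step: one must carefully track how the $y$-only denominator $b$ propagates through the reduction and how the partial-fraction coefficients along the coset combine into a single polynomial $q \in \set F(y)[z]$ of controlled degree. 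I would handle this by an iterated application of the bivariate primary reduction restricted to $O(d)$, as in \cite{HoWa2015}.
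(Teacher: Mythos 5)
First, a point of context: the paper does not prove this statement --- Theorem~\ref{THM:hwthm3.3} is imported verbatim from \cite[Theorem~3.3]{HoWa2015} and used as a black box --- so there is no internal proof to compare yours against, and I am judging your argument on its own merits. Your ``if'' direction is complete and correct, and it coincides with the one piece of relevant machinery the paper does develop: since $T=\sigma_y^\beta\sigma_z^{-\alpha}$ fixes $\alpha y+\beta z$ and hence $d$, condition (ii) gives $f=(T-1)(q/d^j)$, and the factorizations of $\sigma_y^\beta-1$ and $\sigma_z^{-\alpha}-1$ through $\sigma_y-1$ and $\sigma_z-1$ are exactly the identity recorded as \eqref{EQ:integerlinearsummable}.

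The ``only if'' direction, however, has a genuine gap at its crux. The skeleton is right: partial fractions in $z$, restriction to the $(y,z)$-shift orbit of $d$, the dichotomy on the stabilizer $H$, and the finite-support telescoping sum over $\set Z^2$ that forces $a=0$ when $H$ is trivial (and note the hypothesis $0\le\deg_z(a)$ together with the convention $\deg_z(0)=-\infty$ does exclude $a=0$, so your contradiction is legitimate); likewise the derivation of (i) from a primitive generator of $H$ is sound. But the final step --- extracting condition (ii) --- is asserted rather than proved. You claim that $(\shift_y-1)(g)+(\shift_z-1)(h)$ ``may be rewritten as $(T-1)(Q)$ \ldots\ by reversing the operator decomposition from the `if' direction.'' The identity $T-1=(\sigma_y^\beta-1)\sigma_z^{-\alpha}+(\sigma_z^{-\alpha}-1)$ only shows that $(T-1)$-exact functions are $(\sigma_y,\sigma_z)$-exact; it cannot be ``reversed,'' and the converse for fractions supported on the orbit of $d$ is precisely the hard content of the theorem. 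What actually has to be done is to index the partial-fraction terms of $g,h$ along the orbit by $\set Z^2/H\cong\set Z$, note that the denominators attached to two lattice points in the same $H$-coset are literally equal (so their coefficients merge, and normalizing a coefficient back to the base class is only well defined up to the action of $T$), and then carry out the telescoping sum over $\set Z^2/H$ to conclude that $a/b$ lies in the image of $T-1$ on $\set F(y)[z]_{<\deg_z(d)}$, which also yields the degree bound $\deg_z(q)<\deg_z(d)$. You acknowledge this as ``the main obstacle'' and defer it to \cite{HoWa2015} --- i.e., to the very result being proved --- so as a self-contained argument the proof is incomplete exactly where the theorem is nontrivial.
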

Since $d$ is irreducible, the first condition is easily recognized by
comparing coefficients once $d$ is known. In \cite[\S 4]{HoWa2015},
the second condition is checked by finding a polynomial solution of a
system of linear recurrence equations in one variable based on a
universal denominator derived from the $m$-fold Gosper
representation. Such a polynomial solution gives rise to a
desired $q$ for \eqref{EQ:arelation}.

In the rest of this section, we show how to detect the second
condition via the Abramov reduction, without solving any auxiliary
recurrence equations. As a result, we obtain an additive decomposition
of the given rational function in $\set F(y,z)$, from which one can
not only read off the $(\sigma_y,\sigma_z)$-summability, but also
gather useful descriptions on the possible \lq\lq minimal\rq\rq\
nonsummable part. This lays the foundation of our new algorithm
in Section~\ref{SEC:rct}.

Let $R$ be a ring (resp.\ field) and $\sigma: R\rightarrow R$ be an
automorphism of $R$. The pair $(R,\sigma)$ is called a
{\em difference ring (resp.\ field)}. An element $r\in R$ is called a
{\em constant} of $R$ with respect to $\sigma$ if $\sigma(r) = r$.
The set of all such constants forms a subring (resp.\ subfield) of
$R$, called the {\em constant subring (resp.\ subfield)} of $R$ with
respect to $\sigma$. Let $(R_1,\sigma_1)$ and $(R_2,\sigma_2)$ be two
difference rings.  A homomorphism (resp.\ isomorphism) $\psi: R_1
\rightarrow R_2$ is called a {\em difference homomorphism (resp.\ isomorphism)}
from $(R_1,\sigma_1)$ to $(R_2,\sigma_2)$ if $\sigma_2\circ\psi =
\psi\circ\sigma_1$, that is, the left diagram in Figure~\ref{FIG:rule}
commutes.  Two difference rings are then said to be {\em isomorphic}
if there exists a difference isomorphism between them.

\begin{figure}[t]
  \centering
  \begin{tikzpicture}
    \node at (0,0) {$R_2$};
    \draw[->] (0.8,0) -- (1.2,0) node[below] {$\sigma_2$} -- (1.6,0);
    \node at (2.4,0) {$R_2$};
    \draw[->] (0,1) -- (0,.75) node[left] {$\psi$} -- (0,.5);
    \node at (0,1.5) {$R_1$};
    \draw[->] (0.8,1.5) -- (1.2,1.5) node[above] {$\sigma_1$} -- (1.6,1.5);
    \node at (2.4,1.5) {$R_1$};
    \draw[->] (2.4,1) -- (2.4,.75) node[right] {$\psi$} -- (2.4,.5);
    \draw[->] (1,.9) arc [radius=.2, start angle = 140, end angle = -90];
    
    \node at (7,0) {$\set F(y,z)$};
    \draw[->] (7.8,0) -- (8.7,0) node[below] {$\sigma_y$} -- (9.6,0);
    \node at (10.4,0) {$\set F(y,z)$};
    \draw[->] (7,1) -- (7,.75) node[left] {$\phi_{\alpha,\beta}$} -- (7,.5);
    \node at (7,1.5) {$\set F(y,z)$};
    \draw[->] (7.8,1.5) -- (8.7,1.5) node[above]
         {$\tau=\sigma_y^\beta\sigma_z^{-\alpha}$} -- (9.6,1.5);
    \node at (10.4,1.5) {$\set F(y,z)$};
    \draw[->] (10.4,1) -- (10.4,.75) node[right] {$\phi_{\alpha,\beta}$} -- (10.4,.5);
    \draw[->] (8.6,.9) arc [radius=.2, start angle = 140, end angle = -90];
  \end{tikzpicture}
  \caption{Commutative diagrams for difference homomorphisms/isomorphisms.}
  \label{FIG:rule}
\end{figure}

Let $\alpha,\beta$ be two integers with $\beta$ nonzero. We define an
$\set F$-homomorphism $\phi_{\alpha,\beta}: \set F(y,z) \rightarrow \set F(y,z)$
by 
$$\phi_{\alpha,\beta}(y)=\beta y  ~~ \mbox{  and } ~~ \phi_{\alpha, \beta}(z)
= \beta^{-1}z -\alpha y.$$ It is readily seen that
$\phi_{\alpha,\beta}$ is an $\set F$-isomorphism with inverse
$\phi_{\alpha,\beta}^{-1}$ given by $$\phi_{\alpha,\beta}^{-1}(y) =
\beta^{-1}y ~~\mbox{  and } ~~ \phi_{\alpha,\beta}^{-1}(z) = \beta z +\alpha y.$$
We call $\phi_{\alpha,\beta}$ the {\em map for $(\alpha,\beta)$-shift reduction}.
\begin{proposition}\label{PROP:diffiso}
  Let $\alpha,\beta\in \set Z$ with $\beta \neq 0$ and $\tau =
  \sigma_y^\beta\sigma_z^{-\alpha}$. Then $\phi_{\alpha,\beta}$ is a
  difference isomorphism from $(\set F(y,z),\tau)$ to $(\set F(y,z),\sigma_y)$.
\end{proposition}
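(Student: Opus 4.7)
The plan is essentially a direct verification. Since the preceding paragraph already exhibits the $\set F$-linear inverse $\phi_{\alpha,\beta}^{-1}$, the fact that $\phi_{\alpha,\beta}$ is an $\set F$-algebra isomorphism of $\set F(y,z)$ is already in hand; what remains is to check that it is compatible with the two difference structures, i.e.\ that $\sigma_y\circ \phi_{\alpha,\beta}=\phi_{\alpha,\beta}\circ\tau$ (the right diagram in Figure~\ref{FIG:rule}).

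My first step would be to reduce the commutativity to the two generators $y,z$ of $\set F(y,z)$ over $\set F$. Both $\sigma_y$ and $\tau=\sigma_y^{\beta}\sigma_z^{-\alpha}$ fix $\set F$ pointwise, and $\phi_{\alpha,\beta}$ is an $\set F$-algebra homomorphism, so $\sigma_y\circ\phi_{\alpha,\beta}$ and $\phi_{\alpha,\beta}\circ\tau$ are both $\set F$-algebra homomorphisms from $\set F[y,z]$ into $\set F(y,z)$. Two such homomorphisms agree as soon as they agree on $y$ and $z$, and the equality then extends to $\set F(y,z)$ because both maps respect inversion of nonzero elements.

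Next I would carry out the check on the generators. Using the defining formulas, $\sigma_y(\phi_{\alpha,\beta}(y)) = \sigma_y(\beta y) = \beta y + \beta$, while $\phi_{\alpha,\beta}(\tau(y)) = \phi_{\alpha,\beta}(y+\beta) = \beta y + \beta$; and $\sigma_y(\phi_{\alpha,\beta}(z)) = \sigma_y(\beta^{-1}z - \alpha y) = \beta^{-1}z - \alpha y - \alpha$, while $\phi_{\alpha,\beta}(\tau(z)) = \phi_{\alpha,\beta}(z-\alpha) = \beta^{-1}z - \alpha y - \alpha$. Both pairs match, so the diagram commutes on generators and hence everywhere.

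There is really no hard step here. The only mild subtlety worth a sentence of comment is that $\beta\neq 0$ is needed for $\phi_{\alpha,\beta}$ to be well defined as an automorphism (the inverse requires $\beta^{-1}\in \set F$), and for $\tau$ to be an automorphism of $\set F(y,z)$ at all. Once that is noted, combining the commutativity on $y,z$ with the already-verified bijectivity yields that $\phi_{\alpha,\beta}$ is a difference isomorphism between $(\set F(y,z),\tau)$ and $(\set F(y,z),\sigma_y)$, as required.
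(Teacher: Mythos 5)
Your proposal is correct and takes essentially the same route as the paper: the paper also takes the $\set F$-isomorphism property as given and verifies $\sigma_y\circ\phi_{\alpha,\beta}=\phi_{\alpha,\beta}\circ\tau$ by direct substitution, writing both composites applied to a general $f(y,z)$ as $f(\beta y+\beta,\,\beta^{-1}z-\alpha y-\alpha)$, which is the same computation you perform on the generators $y$ and $z$.
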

\begin{proof}
  Since $\phi_{\alpha,\beta}$ is an $\set F$-isomorphism, it remains
  to show that $\sigma_y\circ\phi_{\alpha,\beta}=\phi_{\alpha,\beta}\circ\tau$,
  namely the right diagram in Figure~\ref{FIG:rule} commutes.
  This is confirmed by the observation that
  \[\sigma_y(\phi_{\alpha,\beta}(f(y,z)))
  = \sigma_y(f(\beta y,\beta^{-1}z-\alpha y))
  = f(\beta y + \beta, \beta^{-1}z-\alpha y-\alpha)\]
  and
  \[\phi_{\alpha,\beta}(\tau(f(y,z)))
  = \phi_{\alpha,\beta}(f(y+\beta,z-\alpha))
  =  f(\beta y + \beta, \beta^{-1}z-\alpha y-\alpha)\]
  for any $f \in \set F(y,z)$.
\end{proof}

\begin{corollary}\label{COR:summableiso}
Let $f\in \set F(y,z)$ and assume the conditions of
Proposition~\ref{PROP:diffiso}.  Then $f$ is $\tau$-summable if and
only if $\phi_{\alpha,\beta}(f)$ is $\sigma_y$-summable.
\end{corollary}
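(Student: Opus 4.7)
The plan is to deduce this corollary directly from Proposition~\ref{PROP:diffiso} by exploiting the fact that a difference isomorphism transports summability: if $\psi:(R_1,\sigma_1)\to(R_2,\sigma_2)$ is a difference isomorphism, then $r=\sigma_1(s)-s$ in $R_1$ is equivalent to $\psi(r)=\sigma_2(\psi(s))-\psi(s)$ in $R_2$, because $\psi$ is linear and bijective and intertwines the two shifts. I would state this as a brief general observation, then apply it to the pair $(\set F(y,z),\tau)$ and $(\set F(y,z),\sigma_y)$ with $\psi=\phi_{\alpha,\beta}$.

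For the forward direction, I would assume $f=\tau(g)-g$ for some $g\in\set F(y,z)$, apply $\phi_{\alpha,\beta}$ to both sides, and use $\sigma_y\circ\phi_{\alpha,\beta}=\phi_{\alpha,\beta}\circ\tau$ (guaranteed by Proposition~\ref{PROP:diffiso}) together with $\set F$-linearity of $\phi_{\alpha,\beta}$ to rewrite the right-hand side as $\sigma_y(\phi_{\alpha,\beta}(g))-\phi_{\alpha,\beta}(g)$. This exhibits $\phi_{\alpha,\beta}(g)$ as a $\sigma_y$-certificate for $\phi_{\alpha,\beta}(f)$.

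For the converse, I would start from $\phi_{\alpha,\beta}(f)=\sigma_y(G)-G$ for some $G\in\set F(y,z)$ and set $g:=\phi_{\alpha,\beta}^{-1}(G)$. Since $\phi_{\alpha,\beta}$ is an $\set F$-isomorphism the inverse exists, and the commutation relation of Proposition~\ref{PROP:diffiso} rearranges to $\phi_{\alpha,\beta}^{-1}\circ\sigma_y=\tau\circ\phi_{\alpha,\beta}^{-1}$, so $\phi_{\alpha,\beta}^{-1}$ is itself a difference isomorphism between $(\set F(y,z),\sigma_y)$ and $(\set F(y,z),\tau)$. Applying $\phi_{\alpha,\beta}^{-1}$ to both sides and using linearity yields $f=\tau(g)-g$, proving the converse.

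There is no real obstacle: the statement is purely a transport-of-structure argument, and once Proposition~\ref{PROP:diffiso} is granted the proof is a few lines. The only minor point to be careful about is invoking the inverse intertwining relation for $\phi_{\alpha,\beta}^{-1}$, but this is automatic from the definition of a difference isomorphism.
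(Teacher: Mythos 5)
Your proposal is correct and follows essentially the same route as the paper: both deduce the equivalence by transporting a certificate through the difference isomorphism $\phi_{\alpha,\beta}$ of Proposition~\ref{PROP:diffiso}, using the intertwining relation and $\set F$-linearity. The paper states the biconditional in one line for an arbitrary certificate $g$; your explicit treatment of the converse via $\phi_{\alpha,\beta}^{-1}$ is just a spelled-out version of the same argument.
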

\begin{proof}
  By Proposition~\ref{PROP:diffiso}, $\phi_{\alpha,\beta}$ is a
  difference isomorphism from $(\set F(y,z),\tau)$ to $(\set
  F(y,z),\sigma_y)$. It follows that
  \[f = \tau (g) - g
  \quad\Longleftrightarrow\quad
  \phi_{\alpha,\beta}(f) = \phi_{\alpha,\beta}(\tau(g)-g)
  = \sigma_y(\phi_{\alpha,\beta}(g))-\phi_{\alpha,\beta}(g)\]
  for any $g\in \set F(y,z)$. The assertion follows.
\end{proof}
The problem of deciding whether a rational function $f\in \set
F(y)[z]$ satisfies the equation \eqref{EQ:arelation}, that is, the
$\sigma_y^\beta\sigma_z^{-\alpha}$-summability problem for $f$, is
then equivalent to the $\sigma_y$-summability problem for
$\phi_{\alpha,\beta}(f)$. In fact, there is also a natural one-to-one
correspondence between additive decompositions of $f$ with respect to
$\sigma_y^\beta\sigma_z^{-\alpha}$ and additive decompositions of
$\phi_{\alpha,\beta}(f)$ with respect to $\sigma_y$. 
Together with Definition~\ref{DEF:yremainder}, this motivates
us to introduce the notions of remainder fractions and remainders, in
order to characterize nonsummable rational functions concretely.
\begin{definition}\label{DEF:remfrac}
  A fraction $a/(b\, d^j)$ with $a, d\in \set F[y,z]$, $b\in \set F[y]$
  and $j\in \set N\setminus\{0\}$ is called a {\em remainder fraction}
  if
  \begin{itemize}
  \item $\deg_z(a)<\deg_z(d)$;
  \item $d$ is monic, irreducible and of positive degree in $z$;
  \item $\phi_{\alpha,\beta}(a/b)$ is a $\sigma_y$-remainder
    in case $d$ is $(y,z)$-integer linear over $\set F$ of
    $(\alpha,\beta)$-type.
  \end{itemize}
\end{definition}

\begin{definition}\label{DEF:remainder}
  Let $f$ be a rational function in $\set F(y,z)$. Then $r\in \set
  F(y,z)$ is called a {\em $(\sigma_y,\sigma_z)$-remainder} of $f$ if
  $f-r$ is $(\sigma_y,\sigma_z)$-summable
  and $r$ can be written as
  \begin{equation}\label{EQ:remainder}
    r=\sum_{i=1}^m\sum_{j=1}^{n_i}\frac{a_{ij}}{b_{ij}d_i^j},
  \end{equation}
  where $m,n_i\in\set N$, $a_{ij},d_i\in\set F[y,z]$, $b_{ij}\in \set
  F[y]$ with each $a_{ij}/(b_{ij}d_i^j)$ being a remainder fraction,
  $d_i$ being a factor of the denominator of $r$, and
  $d_i\nsim_{y,z}d_{i'}$ whenever $i\neq i'$ and $1\leq i,i'\leq m$.
  For brevity, we just say that $r$ is a
  $(\sigma_y,\sigma_z)$-remainder if $f$ is clear from the
  context. We refer to~\eqref{EQ:remainder}, along
  with the attached conditions, as the {\em remainder form} of $r$.
\end{definition}
The uniqueness of partial fraction decompositions (in this case with
respect to~$z$) implies that the remainder form of a given
$(\sigma_y,\sigma_z)$-remainder is unique up to reordering and
multiplication by units of $\set F$. Evidently, every single remainder
fraction, or part of summands in \eqref{EQ:remainder}, is a
$(\sigma_y,\sigma_z)$-remainder.  Remainders not only help us to
recognize summability, but also describe the \lq\lq
minimum\rq\rq\ gap between a given rational function and summable
rational functions, as shown in the next two propositions.
\begin{proposition}\label{PROP:remainder}
  Let $r\in\set F(y,z)$ be a nonzero $(\sigma_y,\sigma_z)$-remainder
  with the form~\eqref{EQ:remainder}.  Then each nonzero
  $a_{ij}/(b_{ij}d_i^j)$ for $1\leq i\leq m$ and $1\leq j\leq n_i$, as
  well as $r$ itself, is not $(\sigma_y,\sigma_z)$-summable.
\end{proposition}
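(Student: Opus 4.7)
The plan is to first show that each nonzero summand $a_{ij}/(b_{ij}d_i^j)$ in the decomposition~\eqref{EQ:remainder} is non-summable, and then deduce non-summability of $r$ itself by appealing to \cite[Lemma~3.2]{HoWa2015} (cited in the paragraph preceding Theorem~\ref{THM:hwthm3.3}), which, under the pairwise $(y,z)$-shift inequivalence of the $d_i$ built into Definition~\ref{DEF:remainder}, equates summability of $r$ with simultaneous summability of all its simple fractions. Since $r \neq 0$ there exists at least one nonzero $a_{ij}$, so the corresponding non-summable piece forces $r$ itself to be non-summable.

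For the core step, I fix a nonzero summand $a_{ij}/(b_{ij}d_i^j)$ and apply Theorem~\ref{THM:hwthm3.3}. If $d_i$ is not $(y,z)$-integer linear over $\set F$ then condition (i) of that theorem already fails. So the only interesting case is when $d_i$ is $(y,z)$-integer linear of some $(\alpha,\beta)$-type; writing $\tau=\sigma_y^\beta\sigma_z^{-\alpha}$, it suffices to rule out condition (ii), and in fact it is enough to show the stronger statement that $a_{ij}/b_{ij}$ is not $\tau$-summable at all. By Corollary~\ref{COR:summableiso}, this is equivalent to the non-$\sigma_y$-summability of $\phi_{\alpha,\beta}(a_{ij})/\phi_{\alpha,\beta}(b_{ij})$.

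At this point I use the fact that, by Definition~\ref{DEF:remfrac}, the summand being a remainder fraction forces conditions (i)-(ii) of Proposition~\ref{PROP:taudecomp} to hold. Transporting these through $\phi_{\alpha,\beta}$ using $\deg_y(\phi_{\alpha,\beta}(b_{ij}))=\deg_y(b_{ij})$ and the identity $\phi_{\alpha,\beta}^{-1}\circ\sigma_y^\ell=\tau^\ell\circ\phi_{\alpha,\beta}^{-1}$---essentially the computations already carried out in the proof of Proposition~\ref{PROP:taudecomp}---yields $\deg_y(\phi_{\alpha,\beta}(a_{ij}))<\deg_y(\phi_{\alpha,\beta}(b_{ij}))$ together with $\phi_{\alpha,\beta}(b_{ij})$ being $\sigma_y$-free. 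Thus the trivial decomposition of $\phi_{\alpha,\beta}(a_{ij})/\phi_{\alpha,\beta}(b_{ij})$ (with summable part zero) already meets the hypotheses of Theorem~\ref{THM:abred}, so the fraction is $\sigma_y$-summable iff $\phi_{\alpha,\beta}(a_{ij})=0$, i.e., iff $a_{ij}=0$ (since $\phi_{\alpha,\beta}$ is an $\set F$-isomorphism). The assumed nonvanishing of $a_{ij}$ gives the desired contradiction.

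The argument is essentially an organized application of machinery already assembled in Propositions~\ref{PROP:diffiso} and~\ref{PROP:taudecomp}, Corollary~\ref{COR:summableiso} and Theorems~\ref{THM:abred} and~\ref{THM:hwthm3.3}; the only mildly delicate point is the transport of the remainder-fraction conditions through $\phi_{\alpha,\beta}$ to recognize an Abramov-normal-form input, and this transport merely repeats calculations already verified in the proof of Proposition~\ref{PROP:taudecomp}. No additional hypothesis on the $b_{ij}$ or $a_{ij}$ beyond those in Definition~\ref{DEF:remfrac} is needed.
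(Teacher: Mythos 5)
Your proof is correct and follows essentially the same route as the paper: treat the non-integer-linear case via condition (i) of Theorem~\ref{THM:hwthm3.3}, reduce the integer-linear case to non-$\tau$-summability of $a_{ij}/b_{ij}$, and conclude for $r$ via \cite[Lemma~3.2]{HoWa2015}. The only difference is cosmetic---where the paper cites Proposition~\ref{PROP:taudecomp} as a black box, you inline its content by transporting the remainder-fraction conditions through $\phi_{\alpha,\beta}$ and invoking Theorem~\ref{THM:abred} directly, which is exactly how that proposition is proved.
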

\begin{proof}
  Since $r$ is a $(\sigma_y,\sigma_z)$-remainder, each
  $a_{ij}/(b_{ij}d_i^j)$ is a remainder fraction. For a particular
  nonzero $a_{ij}/(b_{ij}d_i^j)$, namely $a_{ij}\neq 0$, we claim that
  it is not $(\sigma_y,\sigma_z)$-summable.  If $d_i$ is not
  $(y,z)$-integer linear over~$\set F$, then the claim follows by Theorem~\ref{THM:hwthm3.3}.
  Otherwise, assume that $d_i$ is $(y,z)$-integer linear over $\set F$
  of $(\alpha,\beta)$-type.  Since $a_{ij}/(b_{ij}d_i^j)$ is a
  remainder fraction, Definition~\ref{DEF:remfrac} reads that
  $\phi_{\alpha,\beta}(a_{ij}/b_{ij})$ is a $\sigma_y$-remainder and
  thus is not $\sigma_y$-summable. By Corollary~\ref{COR:summableiso},
  $a_{ij}/b_{ij}$ is not $\sigma_y^\beta\sigma_z^{-\alpha}$-summable.
  The claim is then again assured by Theorem~\ref{THM:hwthm3.3}.

  In either case, we have that $a_{ij}/(b_{ij}d_i^j)$ is not
  $(\sigma_y,\sigma_z)$-summable.  Since $r$ is nonzero, at least one
  of the $a_{ij}/(b_{ij}d_i^j)$ is nonzero. By \cite[Lemma~3.2]{HoWa2015},
  $r$ is therefore not $(\sigma_y,\sigma_z)$-summable.
\end{proof}
\begin{proposition}\label{PROP:minimality}
  Let $r\in \set F(y,z)$ be a nonzero $(\sigma_y,\sigma_z)$-remainder
  with the form~\eqref{EQ:remainder}, in which $a_{ij}$ and
  $b_{ij}d_i^j$ are further assumed to be coprime. Assume that there
  exists another $r' \in \set F(y,z)$ such that $r'-r$ is
  $(\sigma_y,\sigma_z)$-summable. Write $r'$ in the form
  \[r' = p' + \sum_{i=1}^{m'}\sum_{j=1}^{n'_i}
  \frac{a'_{ij}}{b'_{ij}{d'_i}^j},\]
  where $m',n'_i\in \set N$, $p'\in \set F(y)[z]$, $a'_{ij},d'_i\in
  \set F[y,z]$ and $b'_{ij}\in \set F[y]$ with
  $\deg_z(a'_{ij})<\deg_z(d'_i)$ and $d'_i$ being monic irreducible
  factor of the denominator of $r'$ and of positive degree in $z$.
  For each $1\leq i\leq m$, define
  \[\Lambda_i = \{i'\in \set N\mid 1\leq i'\leq m'
  \ \text{and}\ d'_{i'}=\sigma_y^{\lambda_{i'}}
  \sigma_z^{\mu_{i'}}(d_i)\ \text{for }\lambda_{i'},\mu_{i'}
  \in \set Z\}.\]
  Then $\Lambda_i$ is nonempty for any $1\leq i\leq m$. Moreover,
  $m\leq m'$, $n_i\leq n'_{i'}$ for all $i' \in
  \Lambda_i$, $\deg_y(b_{ij})\leq\sum_{i'\in\Lambda_i}
  \deg_y(b'_{i'j})$ for each $1\leq i\leq m$ and $1\leq
  j\leq n_i$, and the degree in $z$ of the denominator of $r$ is no
  more than that of $r'$.
\end{proposition}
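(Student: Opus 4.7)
The strategy is to decompose $r'-r$ by $(y,z)$-shift equivalence classes of irreducible factors appearing in its denominator and argue one class at a time. Two preparatory observations drive the argument: first, since $\sigma_y^{\lambda}\sigma_z^{\mu}(f)-f$ is always $(\sigma_y,\sigma_z)$-summable for any $f\in\set F(y,z)$, modulo summable rational functions we can identify each $d'_{i'}$ (for $i'\in\Lambda_i$) with $d_i$ by applying $\psi_{i'}:=\sigma_y^{-\lambda_{i'}}\sigma_z^{-\mu_{i'}}$ to the corresponding summand; second, \cite[Lemma~3.2]{HoWa2015} guarantees that the summability of $r'-r$ propagates to the summability of each individual shift-class component.

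For the non-emptiness of $\Lambda_i$ and the inequality $m\leq m'$, I fix $i$ and examine the shift-class component of $r'-r$ at the class of $d_i$. It contains as contributions the sub-expression $\sum_{j=1}^{n_i} a_{ij}/(b_{ij}\,d_i^j)$ inherited from $r$, which is itself a nonzero $(\sigma_y,\sigma_z)$-remainder (the coprimality hypothesis on $a_{ij}$ and $b_{ij}d_i^j$ makes each nonzero summand a genuine remainder fraction), together with the terms $\sum_{i'\in\Lambda_i}\sum_{j=1}^{n'_{i'}} a'_{i'j}/(b'_{i'j}{d'_{i'}}^j)$ from $r'$. If $\Lambda_i=\emptyset$ the whole component coincides (up to sign) with the nonzero sub-remainder of $r$, contradicting Proposition~\ref{PROP:remainder}; hence $\Lambda_i\neq\emptyset$. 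The pairwise non-shift-equivalence of the $d_i$'s forces $\Lambda_1,\dots,\Lambda_m$ to be pairwise disjoint subsets of $\{1,\dots,m'\}$, so $m\leq\sum_{i=1}^m|\Lambda_i|\leq m'$.

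For the refined bounds I apply $\psi_{i'}$ to each $r'$-contribution in the class at $[d_i]$ to rewrite it modulo summable as a single partial-fraction-style expression $\sum_{j\geq 1} E_j/(F_j\,d_i^j)$, where $F_j$ is the least common multiple of $b_{ij}$ and of $\{\psi_{i'}(b'_{i'j}):i'\in\Lambda_i,\ n'_{i'}\geq j\}$; in particular $\deg_y(F_j)\leq\sum_{i'\in\Lambda_i}\deg_y(b'_{i'j})$ since $\psi_{i'}$ preserves $y$-degree. Because this expression is summable, Theorem~\ref{THM:abred}---preceded if necessary by the isomorphism $\phi_{\alpha,\beta}$ from Proposition~\ref{PROP:diffiso} when $d_i$ is $(y,z)$-integer linear---reduces it to zero. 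The minimality clauses of Theorem~\ref{THM:abred} and Proposition~\ref{PROP:taudecomp} then yield $\deg_y(b_{ij})\leq\deg_y(F_j)$, proving the $y$-degree inequality, while inspecting the highest power of $d_i$ still present in the combined sum and repeating the argument with each $i'\in\Lambda_i$ isolated in turn gives the power inequalities $n_i\leq n'_{i'}$. Summing $n_i\deg_z(d_i)=n_i\deg_z(d'_{i'})$ over $i'$ ranging through the disjoint $\Lambda_i$ then delivers the $z$-degree bound for the denominator of $r$.

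The principal obstacle is the reduction step: one must verify that rewriting the shift-class component via the $\psi_{i'}$'s and then running Theorem~\ref{THM:abred} (with $\phi_{\alpha,\beta}$ in the integer-linear case) preserves enough structure so that the minimality guarantees translate faithfully into the asserted bounds on $\deg_y(b_{ij})$ and $n_i$, handling uniformly both the integer-linear and non-integer-linear cases of $d_i$ via Theorem~\ref{THM:hwthm3.3}.
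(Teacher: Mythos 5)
Your overall route is the paper's: split $r'-r$ into $(y,z)$-shift classes, invoke \cite[Lemma~3.2]{HoWa2015} to localize summability to each class and each power $j$ of $d_i$, normalize every $d'_{i'}$ to $d_i$ via the summable corrections $\psi_{i'}(u)-u$, and then extract the conclusions from Theorem~\ref{THM:hwthm3.3} together with the minimality clauses of Theorem~\ref{THM:abred} and Proposition~\ref{PROP:taudecomp}. The nonemptiness of $\Lambda_i$ via Proposition~\ref{PROP:remainder}, the disjointness argument for $m\le m'$, and the $z$-degree bound all agree with the paper.

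The step that does not work as written is the $y$-degree bound. You place the entire level-$j$ component of the class, \emph{including} the term $-a_{ij}/(b_{ij}d_i^j)$, over the common denominator $F_j=\lcm\bigl(b_{ij},\{\psi_{i'}(b'_{i'j})\}\bigr)$. This causes two failures. First, the claimed inequality $\deg_y(F_j)\le\sum_{i'\in\Lambda_i}\deg_y(b'_{i'j})$ need not hold, since $b_{ij}$ may contribute factors to the lcm dividing none of the $\psi_{i'}(b'_{i'j})$ (e.g.\ $b_{ij}=y(y-1)$ against a single $\psi_{i'}(b'_{i'j})=y(y-2)$ gives $\deg_y(F_j)=3>2$). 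Second, once $a_{ij}/b_{ij}$ is absorbed into the summable combination, reducing that combination yields remainder $0$, and the minimality clause then only compares $0$ against $F_j$; it gives no information about $b_{ij}$. The paper sidesteps both problems by keeping $a_{ij}/b_{ij}$ on the remainder side: Theorem~\ref{THM:hwthm3.3} converts summability of the level-$j$ component into $\sum_{i'}\psi_{i'}(a'_{i'j})/\psi_{i'}(b'_{i'j})=\sigma_y^\beta\sigma_z^{-\alpha}(q)-q+a_{ij}/b_{ij}$ (or simply $=a_{ij}/b_{ij}$ when $d_i$ is not integer linear), and the minimality of $b_{ij}$ from Proposition~\ref{PROP:taudecomp} --- this is precisely where $\gcd(a_{ij},b_{ij}d_i^j)=1$ is used --- bounds $\deg_y(b_{ij})$ by the $y$-degree of the denominator of the \emph{left-hand side}, which divides $\lcm_{i'}\psi_{i'}(b'_{i'j})$ and hence has degree at most $\sum_{i'}\deg_y(b'_{i'j})$. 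A secondary concern: your plan to obtain $n_i\le n'_{i'}$ by "isolating each $i'\in\Lambda_i$ in turn" is not available, because summability is only known for the aggregate sum over $\Lambda_i$, not for the individual contributions; any argument here has to work with the combined expression at each level $j$, as the paper does.
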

\begin{proof}
  Since $r'-r$ is $(\sigma_y,\sigma_z)$-summable, all the rational
  function $\sum_{i'\in \Lambda_{i}} a'_{i'j}/(b'_{i'j}{d'_{i'}}^j)
  -a_{ij}/(b_{ij}d_i^j)$ are $(\sigma_y,\sigma_z)$-summable by
  \cite[Lemma~3.2]{HoWa2015}, and then so are the
  \begin{equation}\label{EQ:summablerat}
    {\sum_{i'\in \Lambda_{i}}
      \frac{\sigma_y^{-\lambda_{i'}}\sigma_z^{-\mu_{i'}}(a'_{i'j})}
           {\sigma_y^{-\lambda_{i'}}(b'_{i'j}){d_i^j}}
      -\frac{a_{ij}}{b_{ij} d_{i}^j}}.
  \end{equation}
  Since $r$ is a nonzero $(\sigma_y,\sigma_z)$-remainder, we conclude
  from Proposition~\ref{PROP:remainder} that each nonzero
  $a_{ij}/(b_{ij}d_i^j)$ is not $(\sigma_y,\sigma_z)$-summable.
  Notice that for each $1\leq i\leq m$, there is at least one integer
  $j$ with $1\leq j\leq n_i$ such that $a_{ij}\neq 0$. It then follows
  from the summability of \eqref{EQ:summablerat} that every
  $\Lambda_i$ is nonempty, namely every $d_i$ is $(y,z)$-shift
  equivalent to some $d'_{i'}$ for $1\leq i'\leq m'$, and that
  $n_i\leq n'_{i'}$ for any $i'\in \Lambda_i$. Notice that the $d_i$
  are pairwise $(y,z)$-shift inequivalent. Thus the $\Lambda_i$ are
  pairwise disjoint, which implies that $m\leq m'$. Accordingly, the
  degree in $z$ of the denominator of $r$ is no more than that
  of~$r'$.

  It remains to show the inequality for the degree of each
  $b_{ij}$. For each $1\leq i\leq m$ and $1\leq j\leq n_i$, by
  Theorem~\ref{THM:hwthm3.3}, the summability of
  \eqref{EQ:summablerat} either yields
  \[\sum_{i'\in \Lambda_i}\frac{\sigma_y^{-\lambda_{i'}}
    \sigma_z^{-\mu_{i'}}(a'_{i'j})}
    {\sigma_y^{-\lambda_{i'}}(b'_{i'j})}
    = \sigma_y^\beta\sigma_z^{-\alpha}(q)-q+\frac{a_{ij}}{b_{ij}}
    \quad \text{for some}\ q\in \set F(y)[z],\]
  if $d_i$ is $(y,z)$-integer linear over $\set F$ of
  $(\alpha,\beta)$-type or otherwise yields
  \[\sum_{i'\in \Lambda_i}
  \frac{\sigma_y^{-\lambda_{i'}}\sigma_z^{-\mu_{i'}}(a'_{\ i'j})}
       {\sigma_y^{-\lambda_{i'}}(b'_{i'j})}
       =\frac{a_{ij}}{b_{ij}}.\]
  The assertion is evident in the latter case. For the former case,
  because $a_{ij}/(b_{ij}d_i^j)$ is a remainder fraction, the
  assertion follows by the minimality of $\phi_{\alpha,\beta}(b_{ij})$
  (and thus $b_{ij}$) from Theorem~\ref{THM:abred}.
\end{proof}

With everything in place, we now present a bivariate extension of the
Abramov reduction, which addresses the
$(\sigma_y,\sigma_z)$-decomposition problem.

\medskip\noindent{\bf Bivariate Abramov reduction.}
Given a rational function $f\in \set F(y,z)$, compute three rational
functions $g,h,r\in \set F(y,z)$ such that $r$ is a
$(\sigma_y,\sigma_z)$-remainder of $f$ and
\begin{equation}\label{EQ:biabred}
  f = (\shift_y-1)(g)+(\shift_z-1)(h)+r.
\end{equation}

\begin{enumerate}
\item Apply the primary reduction to $f$ to find $g,h\in \set F(y,z)$,
  $m,n_i\in \set N$, $a_{ij},d_i\in \set F[y,z]$ and $b_{ij}\in \set
  F[y]$ such that \eqref{EQ:hwred} holds.

\item For $i = 1, \dots, m$ do\\[1ex]
  \hphantom{for}If $d_i$ is $(y,z)$-integer linear over $\set F$
  of $(\alpha_i,\beta_i)$-type then
  \begin{itemize}
  \item[2.1] Compute $\tilde a_{ij}/\tilde b_{ij} =
    \phi_{\alpha_i,\beta_j}(a_{ij}/b_{ij})$ with
    $\phi_{\alpha_i,\beta_i}$ being the map for
    $(\alpha_i,\beta_i)$-shift reduction;
    \item[2.2] For $j=1, \dots, n_i$ do
    \begin{itemize}
    \item[2.2.1] Apply the Abramov reduction to $\tilde
      a_{ij}/\tilde b_{ij}$ with respect to $y$ to get $\tilde
      q_{ij},\tilde r_{ij}\in \set F(y)[z]$ such that
      \begin{equation*}
        \frac{\tilde a_{ij}}{\tilde b_{ij}} =
        \sigma_y(\tilde q_{ij})-\tilde q_{ij}+\tilde r_{ij}.
      \end{equation*}
    \item[2.2.2] Apply $\phi_{\alpha_i,\beta_i}^{-1}$ to both sides of the previous
      equation to get
      \begin{equation}\label{EQ:bishiftred}
        \frac{a_{ij}}{b_{ij}}
        =\sigma_y^{\beta_i}\sigma_z^{-\alpha_i}(q_{ij})-q_{ij}+r_{ij},
      \end{equation}
      where $q_{ij}=\phi_{\alpha_i,\beta_i}^{-1}(\tilde q_{ij})$ and
      $r_{ij} = \phi_{\alpha_i,\beta_i}^{-1}(\tilde r_{ij})$.
    \item[2.2.3] Update $a_{ij}$ and $b_{ij}$ to be the numerator and
      denominator of $r_{ij}$, respectively.
    \end{itemize}
  \item[2.3] Update
    \begin{equation}\label{EQ:gh}
    g = g + \sum_{j=1}^{n_i}\sum_{k=0}^{\beta_i-1}
      \sigma_y^k\sigma_z^{-\alpha_i}\left(\frac{q_{ij}}{d_i^j}\right)
      \ \ \text{and}\ \
      h = h+
      \begin{cases}
        {\displaystyle
        \sum_{j=1}^{n_i}\sum_{k=1}^{\alpha_i}\sigma_z^{-k}\left(
        -\frac{q_{ij}}{d_i^j}\right)} \quad &\alpha_i\geq 0\\[4ex]
        {\displaystyle
        \sum_{j=1}^{n_i}\sum_{k=0}^{-\alpha_i-1}\sigma_z^k\left(
        \frac{q_{ij}}{d_i^j}\right)}\quad &\alpha_i< 0
      \end{cases}.
      \end{equation}
  \end{itemize}
\item Update $r = \sum_{i=1}^m\sum_{j=1}^{n_i}a_{ij}/(b_{ij}d_i^j)$, and
  return $g,h,r$.
\end{enumerate}

\begin{theorem}\label{THM:biabred}
  Let $f$ be a rational function in $\set F(y,z)$. Then the bivariate
  Abramov reduction computes two rational functions $g,h\in \set
  F(y,z)$ and a $(\sigma_y,\sigma_z)$-remainder $r\in \set F(y,z)$
  such that~\eqref{EQ:biabred} holds. Moreover, $f$ is
  $(\sigma_y,\sigma_z)$-summable if and only if $r = 0$.
\end{theorem}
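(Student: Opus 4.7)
I will track three invariants through the algorithm: (a) the identity $f = (\shift_y-1)(g) + (\shift_z-1)(h) + r$ is preserved after each update; (b) the final $r$ satisfies the conditions of Definition~\ref{DEF:remainder}; and (c) $f$ is $(\sigma_y,\sigma_z)$-summable if and only if $r = 0$.

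For (a), step~1 is the primary reduction, so the identity holds initially with $r$ in the form~\eqref{EQ:rform}. In each iteration of step~2, equation~\eqref{EQ:bishiftred} splits the $(i,j)$-summand as
\[
\frac{a_{ij}}{b_{ij}d_i^j}
= \frac{\sigma_y^{\beta_i}\sigma_z^{-\alpha_i}(q_{ij}) - q_{ij}}{d_i^j} + \frac{r_{ij}}{d_i^j}.
\]
Since $d_i$ is $(y,z)$-integer linear of $(\alpha_i,\beta_i)$-type, it is fixed by $\sigma_y^{\beta_i}\sigma_z^{-\alpha_i}$; setting $Q = q_{ij}/d_i^j$, the first summand becomes $\sigma_y^{\beta_i}\sigma_z^{-\alpha_i}(Q) - Q$. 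Using the splitting
\[
\sigma_y^{\beta_i}\sigma_z^{-\alpha_i}(Q) - Q
= (\sigma_y^{\beta_i} - 1)\sigma_z^{-\alpha_i}(Q) + (\sigma_z^{-\alpha_i} - 1)(Q),
\]
I telescope each factor: $\sigma_y^{\beta_i} - 1 = (\shift_y - 1)\sum_{k=0}^{\beta_i - 1}\sigma_y^k$, while $\sigma_z^{-\alpha_i} - 1$ telescopes into a $(\shift_z - 1)$-coboundary whose closed form depends on the sign of $\alpha_i$. A direct verification will show the resulting pieces coincide with the updates of $g$ and $h$ in step~2.3.

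For (b), the conditions on the $d_i$ (monic, irreducible, positive degree in $z$, pairwise $(y,z)$-shift inequivalent) are inherited from the primary reduction. Proposition~\ref{PROP:taudecomp}, applied to $\tilde a_{ij}/\tilde b_{ij}=\phi_{\alpha_i,\beta_i}(a_{ij}/b_{ij})$ together with its Abramov decomposition from step~2.2.1, yields conditions (i)--(ii) of that proposition for the updated $a_{ij}, b_{ij}$. The bound $\deg_z(a_{ij}) < \deg_z(d_i)$ is preserved because $\tilde b_{ij}\in\set F[y]$ is $z$-free, so Abramov's reduction in step~2.2.1 acts $\set F(z)$-linearly and cannot raise the $z$-degree of the numerator, and the map $\phi_{\alpha_i,\beta_i}^{-1}$ (which sends $z \mapsto \beta_i z + \alpha_i y$) also preserves $\deg_z$.

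For (c), if $r = 0$ then the decomposition from (a) shows that $f$ is summable. Conversely, if $f$ is summable then so is $r = f - (\shift_y-1)(g) - (\shift_z-1)(h)$, but a nonzero $(\sigma_y,\sigma_z)$-remainder cannot be summable by Proposition~\ref{PROP:remainder}, forcing $r = 0$. The main obstacle will be the telescoping bookkeeping in~(a)---in particular verifying that the piecewise formulas for updating $h$ in step~2.3, which differ depending on the sign of $\alpha_i$, are exactly the $(\shift_z - 1)$-coboundaries produced by the splitting above.
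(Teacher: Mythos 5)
Your proposal is correct and follows essentially the same route as the paper: primary reduction, then Proposition~\ref{PROP:taudecomp} via the conjugation $\phi_{\alpha_i,\beta_i}$ to certify that each $r_{ij}/d_i^j$ is a remainder fraction, and finally Proposition~\ref{PROP:remainder} for the summability criterion. Your operator factorization $\sigma_y^{\beta}\sigma_z^{-\alpha}-1=(\sigma_y^{\beta}-1)\sigma_z^{-\alpha}+(\sigma_z^{-\alpha}-1)$, telescoped termwise (using that $d_i^j$ is fixed by $\sigma_y^{\beta_i}\sigma_z^{-\alpha_i}$), is exactly the paper's identity~\eqref{EQ:integerlinearsummable} and reproduces the sign-dependent updates of step~2.3.
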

\begin{proof}
  Applying the primary reduction to $f$ yields \eqref{EQ:hwred}.  For
  any nonzero $a_{ij}/(b_{ij}d_i^j)$ obtained in step~1, if $d_i$ is
  not $(y,z)$-integer linear over $\set F$ then we know from
  Theorem~\ref{THM:hwthm3.3} that $a_{ij}/(b_{ij}d_i^j)$ is not
  $(\sigma_y,\sigma_z)$-summable and is thus already a remainder
  fraction. Otherwise, there exist coprime integers $\alpha_i,\beta_i$
  with $\beta_i>0$ so that $d_i= p_i(\alpha_i y + \beta_i z)$ for some
  $p_i(Z)\in \set F[Z]$. By Theorem~\ref{THM:abred} and Definition~\ref{DEF:remfrac}, for
  each $1\leq j\leq n_i$, steps~2.2.1-2.2.2 correctly find $q_{ij}$
  and $r_{ij}$ such that \eqref{EQ:bishiftred} holds and
  $r_{ij}/d_i^j$ is a remainder fraction. After step~2.2, plugging all
  \eqref{EQ:bishiftred} into \eqref{EQ:hwred} then gives (with a
  slight abuse of notation):
  \[f = (\shift_y-1)(g)+(\shift_z-1)(h) +
  \sum_{i:\, d_i = p_i(\alpha_i y+ \beta_i z)}\sum_{j=1}^{n_i}
  \frac{\sigma_y^{\beta_i}\sigma_z^{-\alpha_i}(q_{ij})-q_{ij}}{d_i^j}
  + r,\]
  where the index $i$ runs through all $(y,z)$-integer linear $d_i$'s
  and $r = \sum_{i=1}^m\sum_{j=1}^{n_i}a_{ij}/(b_{ij}d_i^j)$ is a
  $(\sigma_y,\sigma_z)$-remainder by Definition~\ref{DEF:remainder}.
  The assertions then follow from Proposition~\ref{PROP:remainder} and
  the observation that
  \begin{align}
    \frac{\sigma_y^{\beta_i}\sigma_z^{-\alpha_i}(q_{ij})-q_{ij}}{d_i^j}
    =(\shift_y-1)\left(\sum_{k=0}^{\beta_i-1}\sigma_y^k\sigma_z^{-\alpha_i}
     \left(\frac{q_{ij}}{d_i^j}\right)\right)+
   \begin{cases}
     {\displaystyle
     (\shift_z-1)\left(\sum_{k=1}^{\alpha_i}\sigma_z^{-k}
     \left(-\frac{q_{ij}}{d_i^j}\right)\right)}
     &\text{if } \alpha_i\geq 0 \\[4ex]
     {\displaystyle
    (\shift_z-1)\left(\sum_{k=0}^{-\alpha_i-1}\sigma_z^{k}
     \left(\frac{q_{ij}}{d_i^j}\right)\right)}
     &\text{if }\alpha_i< 0
   \end{cases}
   \label{EQ:integerlinearsummable}
  \end{align}
  for any $d_i = p_i(\alpha_i y+\beta_i z)$.
\end{proof}

\begin{example}\label{EX:example}
  Consider the rational function $f$ admitting the partial fraction
  decomposition $f = f_1 + f_2 + f_3$ with
  \begin{align*}
    f_1 &= \frac{x^2z+1}{\underbrace{(x+y)(x+z)^2+1}_{d_1}},\ \
    f_2 = \frac{(x^2+xy+3x-3)z-x-y+3}{(x+y)(x+y+3)
      (\underbrace{(x+2y+3z)^2+1}_{d_2})}
    \ \ \text{and}\ \
    f_3 = \frac1{\underbrace{x-y+z}_{d_3}}.
  \end{align*}
  Note that $d_1,d_2,d_3$ are $(y,z)$-shift inequivalent to each
  other. Hence $f$ remains unchanged after applying the primary
  reduction. Since $d_1$ is not $(y,z)$-integer linear, we leave $f_1$
  untouched and proceed to deal with $f_2$. Notice that $d_2$ is
  $(y,z)$-integer linear of $(2,3)$-type. Then applying the Abramov
  reduction to $\phi_{2,3}(f_2 d_2)$ with $\phi_{2,3}$ being the map 
  for $(2,3)$-shift reduction yields
  \[ 
  \phi_{2,3}(f_2 d_2)
  = (\shift_y-1)\left(\frac{z-6xy^2-2x^2y+2x}{3(x+3y)}\right)+
  \frac{\frac13 xz+\frac23 x^2+1}{x+3y},\]
  which, when applied by $\phi_{2,3}^{-1}$, leads to
  \[
  f_2 d_2= (\shift_y^3\shift_z^{-2}-1)(q_2)
  +\frac{\frac13x(2y+3z)+\frac23x^2+1}{x+y}
  \quad\text{with}\ q_2=\frac{3(2y+3z)-2xy^2-2x^2y+6x}{9(x+y)} .\]  
  Using \eqref{EQ:integerlinearsummable}, we decompose $f_2$ as
  \[ f_2 = (\shift_y-1)\left(\sum_{k=0}^{2}\sigma_y^k\sigma_z^{-2}
  \Big(\frac{q_2}{d_2}\Big)\right) + (\shift_z-1) \left(
  \sum_{k=1}^2\sigma_z^{-k}\Big(-\frac{q_2}{d_2}\Big)\right)+
  \underbrace{\frac{\frac13x(2y+3z)+\frac23x^2+1}{(x+y)((x+2y+3z)^2+1)}}_{r}.\]
  One sees that $r$ is a $(\sigma_y,\sigma_z)$-remainder of $f_2$, and
  thus $f_2$ is not $(\sigma_y,\sigma_z)$-summable by
  Theorem~\ref{THM:biabred}.  Along the same lines as above, we have
  \[f_3 = (\shift_y-1)\left(\frac{y}{x-y+z+1}\right)
  +(\shift_z-1)\left(\frac{y}{x-y+z}\right), \]
  implying that $f_3$ is $(\sigma_y,\sigma_z)$-summable. Combining
  everything together, $f$ is finally decomposed as
  \[f = (\shift_y-1)(g)+(\shift_z-1)(h) +f_1 +r\]  
  with $g=\sum_{k=0}^{2}\sigma_y^k\sigma_z^{-2}(q_2/d_2) +y/(x-y+z+1)$
  and $h=\sum_{k=1}^2\sigma_z^{-k}(-q_2/d_2) +y/(x-y+z)$. Thus $f$ is
  not $(\sigma_y,\sigma_z)$-summable by Theorem~\ref{THM:biabred}. We
  will use $f$ as a running example in this paper.
\end{example}

\section{Linearity of remainders}\label{SEC:remlinearity}
As mentioned in the introduction, we expect our reduction algorithm to
induce a linear map, that is, the sum of two remainders was expected
to also be a remainder. Unfortunately, this is not always the case in
our setting, because some requirements in Definition~\ref{DEF:remainder}
may be broken by the addition among $(\sigma_y,\sigma_z)$-remainders, as seen in the
following examples. This prevents us from applying the bivariate
Abramov reduction developed in the previous section to construct a
telescoper in a direct way as was done in the differential case.
However, observe that a rational function in $\set F(y,z)$ may have
more than one $(\sigma_y,\sigma_z)$-remainder and any two of them
differ by a $(\sigma_y,\sigma_z)$-summable rational function. This
suggests a possible way to circumvent the above difficulty. That is,
choosing proper members from the residue class modulo summable
rational functions of the given $(\sigma_y,\sigma_z)$-remainders so as to make the linearity
become true. The goal of this section is to show that this direction
always works and it can be accomplished algorithmically.  We note that
a similar idea was used in the bivariate hypergeometric case \cite[\S 5]{CHKL2015}.
\begin{example}\label{EX:nonilremsum}
  Let $r = f_1$ and $s = \sigma_x(f_1)$ with $f_1$ being given in
  Example~\ref{EX:example}. Then $r$ and $s$ are both
  $(\sigma_y,\sigma_z)$-remainders since both denominators $d_1$ and
  $\sigma_x(d_1)$ are not $(y,z)$-integer linear. However their sum is
  not a $(\sigma_y,\sigma_z)$-remainder since $d_1$ is $(y,z)$-shift
  equivalent to $\sigma_x(d_1)$, namely $d_1 =
  \sigma_y^{-1}\sigma_z^{-1}\sigma_x(d_1)$.  Nevertheless, we can find
  another $(\sigma_y,\sigma_z)$-remainder $t$ of $s$ such that $r+t$
  has this property. For example, let
  \[t =  (\shift_y-1)\left(-\sigma_y^{-1}(s)\right)
  +(\shift_z-1)\left(-\sigma_y^{-1}\sigma_z^{-1}(s)\right) +s
  =\frac{(x+1)^2(z-1)+1}{(x+y)(x+z)^2+1},\]
  and then
  \[r+t = \frac{(2x^2+2x+1)z-x^2-2x+1}{(x+y)(x+z)^2+1}\]
  is a $(\sigma_y,\sigma_z)$-remainder by definition.  Alternatively,
  one can compute a $(\sigma_y,\sigma_z)$-remainder $\tilde t$ of $r$,
  say
  \begin{align*}
    \tilde t = (\shift_y-1)\left(r\right)
    +(\shift_z-1)\left(\sigma_y(r)\right)
    + r = \frac{x^2(z+1)+1}{(x+y+1)(x+z+1)^2+1}
  \end{align*}
  so that
  \[\tilde t+ s= \frac{(2x^2+2x+1)z+x^2+2}{(x+y+1)(x+z+1)^2+1}\]
  is a $(\sigma_y,\sigma_z)$-remainder.
\end{example}
\begin{example}\label{EX:ilremsum}
  Let
  \[r =\frac{\frac13x(2y+3z)+\frac23x^2+1}{(x+y)((x+2y+3z)^2+1)}
  \quad\text{and}\quad s = \frac{(\frac13x+1)(2y+3z)
    +\frac23(x+1)^2+2x+\frac{13}3}{(x+y+5)((x+2y+3z+1)^2+1)}.\]
  Then both $r$ and $s$ are $(\sigma_y,\sigma_z)$-remainders, but
  again their sum is not since $(x+2y+3z)^2+1$ is $(y,z)$-shift
  equivalent to $(x+2y+3z+1)^2+1$.  As in Example~\ref{EX:nonilremsum},
  we find a $(\sigma_y,\sigma_z)$-remainder
  \[\tilde s
  =\frac{a/b}{(x+2y+3z)^2+1}\quad\text{with} \quad \frac{a}{b} =
  \frac{(\frac13x+1)(2y+3z)+\frac23x^2+3x+4}{x+y+6}\]
  such that $s-\tilde s$ is
  $(\sigma_y,\sigma_z)$-summable.  However, the sum $r + \tilde s$ is
  still not a $(\sigma_y,\sigma_z)$-remainder since
  $\phi_{2,3}\big((\frac13x(2y+3z)+\frac23x^2+1)/(x+y)+a/b\big)$ is
  not a $\sigma_y$-remainder, where $\phi_{2,3}$ denotes the map for
  $(2,3)$-shift reduction. Notice that
  \begin{align*}
    \frac{a}{b}
    = (\shift_y^3\shift_z^{-2}-1)
    \left(\sum_{k=1}^2\sigma_y^{-3k}\sigma_z^{2k}
    \Big(\frac{a}{b} \Big)\right)
    +\frac{(\frac13x+1)(2y+3z)+\frac23x^2+3x+4}{x+y},
  \end{align*}
  so \eqref{EQ:integerlinearsummable} enables us to find a new
  $(\sigma_y,\sigma_z)$-remainder
  \[t=\frac{(\frac13x+1)(2y+3z)+\frac23x^2+3x+4}
           {(x+y)((x+2y+3z)^2+1)}\]
  such that $s-t$ is $(\sigma_y,\sigma_z)$-summable and
  \[r+t = \frac{(\frac23x+1)(2y+3z)+\frac43x^2+3x+5}{(x+y)((x+2y+3z)^2+1)}\]
  is a $(\sigma_y,\sigma_z)$-remainder. Another possible choice is to
  find a $(\sigma_y,\sigma_z)$-remainder $\tilde r$ of~$r$ such that
  $\tilde r + s$ is a $(\sigma_y,\sigma_z)$-remainder.
\end{example}
In order to achieve the linearity of $(\sigma_y,\sigma_z)$-remainders, we need to develop
two lemmas. The first one mimics the idea of Lemma~5.5
in~\cite{CHKL2015} in the bivariate setting.

\begin{lemma}\label{LEM:localremainder}
  Let $a,d\in \set F[y,z], b\in \set F[y]\setminus\{0\}$ and $j\in
  \set N\setminus\{0\}$.  Let $\lambda,\mu$ be two integers. Then one
  finds $g,h\in \set F(y,z)$ such that
  \begin{equation}\label{EQ:remsum}
    \frac{a}{b\, d^j} = (\shift_y-1)(g)+(\shift_z-1)(h)
    +\frac{\sigma_y^\lambda\sigma_z^\mu(a)}
    {\sigma_y^\lambda(b)\sigma_y^\lambda\sigma_z^\mu(d)^j}.
  \end{equation}
  Moreover, assume that $d$ is not $(y,z)$-integer linear over $\set F$. If
  $a/(b\, d^j)$ is a remainder fraction, then so is
  $\sigma_y^\lambda\sigma_z^\mu(a)/(\sigma_y^\lambda(b)\sigma_y^\lambda\sigma_z^\mu(d)^j)$.
\end{lemma}
\begin{proof}
  A direct calculation shows that
  \begin{align*}
    \frac{s}{t} &= (\shift_v-1)\left(
    -\sum_{j=0}^{i-1}\sigma_v^j\left(\frac{s}{t}\right)\right)
    +\frac{\sigma_v^i(s)}{\sigma_v^i(t)}
    =(\shift_v-1)\left(
    \sum_{j=1}^i\sigma_v^{-j}\left(\frac{s}{t}\right)\right)
    +\frac{\sigma_v^{-i}(s)}{\sigma_v^{-i}(t)}
  \end{align*}
  for any $s,t\in \set F[y,z]$, $i\in \set N$ and $v\in \{y,z\}$. By
  iteratively applying the above formulas, one readily computes
  $g,h\in\set F(y,z)$ such that \eqref{EQ:remsum} holds.
  
  Moreover, if $d$ is not $(y,z)$-integer linear over $\set F$, then
  neither is $\sigma_y^\lambda\sigma_z^\mu(d)$.  Since $a/(b\,d^j)$ is a
  remainder fraction, by Definition~\ref{DEF:remfrac},
  $\deg_z(a)<\deg_z(d)$ and $d$ is monic, irreducible and of positive
  degree in~$z$.  Shifting polynomials in $\set F[y,z]$ with respect
  to $y$ or $z$ preserves these properties.  It follows from
  definition that
  $\sigma_y^\lambda\sigma_z^\mu(a)/(\sigma_y^\lambda(b)\sigma_y^\lambda\sigma_z^\mu(d)^j)$
  is a remainder fraction.
\end{proof}
The next lemma is an immediate result of Theorem~5.6 in
\cite{CHKL2015}.
\begin{lemma}\label{LEM:bishiftcoprime}
  Let $\alpha,\beta\in \set Z$ with $\beta \neq 0$ and let
  $\phi_{\alpha,\beta}$ denote the map for $(\alpha,\beta)$-shift
  reduction. Let $a,\bar a\in \set F[y,z]$ and $b,\bar b \in \set
  F[y]\setminus\{0\}$ be such that both $\phi_{\alpha,\beta}(a/b)$ and
  $\phi_{\alpha,\beta}(\bar a/\bar b)$ are $\sigma_y$-remainders. Then
  one finds $q\in\set F(y)[z]$, $a'\in \set F[y,z]$ and $b'\in \set
  F[y]$ with $\phi_{\alpha,\beta}(a'/b')$ being a $\sigma_y$-remainder
  such that
  \[
  \frac{a}{b} = (\shift_y^\beta\shift_z^{-\alpha}-1)(q)
  + \frac{a'}{b'},
  \]
  and $\phi_{\alpha,\beta}(c_1\bar a/\bar b+c_2a'/b')$ is a
  $\sigma_y$-remainder for all $c_1,c_2\in\set F$.
\end{lemma}
\begin{proof}
 By \cite[Theorem~5.6]{CHKL2015} and \cite[Proposition~3.2]{Huan2016},
 there exist $\tilde q\in \set F(y)[z]$, $\tilde a\in \set F[y,z]$ and
 $\tilde b\in \set F[y]$ with $\tilde a/\tilde b$ being a
 $\sigma_y$-remainder such that
 \[
 \phi_{\alpha,\beta}\left(\frac{a}{b}\right)
 = \sigma_y(\tilde q)-\tilde q + \frac{\tilde a}{\tilde b},
 \]
 and $c_1\phi_{\alpha,\beta}(\bar a/\bar b)+c_2\tilde a/\tilde b$ is
 a $\sigma_y$-remainder for all $c_1,c_2\in\set F$.  Notice that
 $\phi_{\alpha,\beta}$ is an $\set F$-isomorphism and
 $\sigma_y\circ\phi_{\alpha,\beta} = \phi_{\alpha,\beta}\circ \tau$
 with $\tau = \sigma_y^\beta\sigma_z^{-\alpha}$.  So
 $\phi_{\alpha,\beta}^{-1}\circ\sigma_y =
 \tau\circ\phi_{\alpha,\beta}^{-1}$.  Letting
 $q=\phi_{\alpha,\beta}^{-1}(\tilde q)$,
 $a'=\phi_{\alpha,\beta}^{-1}(\tilde a)$ and
 $b'=\phi_{\alpha,\beta}^{-1}(\tilde b)$ concludes the lemma.
\end{proof}

We are now ready to give an algorithm that provides a feasible way to
obtain the linearity.

\medskip\noindent {\bf Remainder linearization.}
Given two $(\sigma_y,\sigma_z)$-remainders $r, s\in \set F(y,z)$,
compute $g, h\in \set F(y,z)$ and a
$(\sigma_y,\sigma_z)$-remainder $t\in \set F(y,z)$ such that
 \begin{equation}\label{EQ:s2t}
    s=(\shift_y-1)(g)+(\shift_z-1)(h)+t
\end{equation}
and $c_1r+c_2t$ is a $(\sigma_y,\sigma_z)$-remainder for all
$c_1,c_2\in\set F$.

\begin{enumerate}
\item Write $r$ and $s$ in the remainder forms
  \begin{equation}\label{EQ:rsforms}
    r = \sum_{i=1}^{\bar m}\sum_{j=1}^{\bar n_i}
    \frac{\bar a_{ij}}{\bar b_{ij}\bar d_i^j}
    \quad\text{and}\quad
    s = \sum_{i=1}^{m}\sum_{j=1}^{n_i} \frac{a_{ij}}{b_{ij}d_i^j}.
  \end{equation}
	
\item Set $g = h= 0$.\\[1ex]
  For $i = 1, \dots, m$ do\\[1ex]
  \hphantom{for}If there exists $k\in \{1, 2, \dots, \bar m\}$ such
  that $\bar d_{k}=\sigma_y^\lambda\sigma_z^\mu(d_i)$ for some
  $\lambda, \mu\in \set Z$, then
  \begin{itemize}
  \item[2.1] For $j=1,\dots, n_i$ do
  \begin{itemize}
  \item[2.1.1] Apply Lemma~\ref{LEM:localremainder} to 
  $a_{ij}/(b_{ij}d_i^j)$ to find $g_{ij}, h_{ij}\in \set F(y,z)$ such that
  \begin{equation}\label{EQ:simplefrac}
    \frac{a_{ij}}{b_{ij}d_i^j}=(\shift_y-1)(g_{ij})+(\shift_z-1)(h_{ij})
    +\frac{\sigma_y^\lambda\sigma_z^\mu(a_{ij})}{\sigma_y^\lambda(b_{ij})\bar d_k^j}.
  \end{equation}
  
  \item[2.1.2] If $d_i$ is $(y,z)$-integer linear over $\set F$ of
    $(\alpha_i,\beta_i)$-type then
  \begin{quote}
  Apply Lemma~\ref{LEM:bishiftcoprime} to
  $\sigma_y^\lambda\sigma_z^\mu(a_{ij})/\sigma_y^\lambda(b_{ij})$ to
  find $q_{ij}\in \set F(y)[z]$, $a_{ij}'\in \set F[y,z]$ and $b_{ij}'
  \in \set F[y]$ with $\phi_{\alpha_i,\beta_i}(a_{ij}'/b_{ij}')$ being
  a $\sigma_y$-remainder such that
  \begin{equation}\label{EQ:bishift}
    \frac{\sigma_y^\lambda\sigma_z^\mu(a_{ij})}{\sigma_y^\lambda(b_{ij})}
    =(\shift_y^{\beta_i}\shift_z^{-\alpha_i}-1)(q_{ij})+\frac{a_{ij}'}{b_{ij}'},
  \end{equation}
  and $\phi_{\alpha_i,\beta_i}(c_1\bar a_{kj}/\bar b_{kj}+c_2a_{ij}'/b_{ij}')$
  is a $\sigma_y$-remainder for all $c_1,c_2\in\set F$; update
  $a_{ij}, b_{ij}$ to be $a'_{ij}, b'_{ij}$, respectively, and update
  $g,h$ by \eqref{EQ:gh}.
  \end{quote}
  
  Else update $a_{ij},b_{ij}$ to be
  $\sigma_y^\lambda\sigma_z^\mu(a_{ij}), \sigma_y^\lambda(b_{ij})$,
  respectively.
  \end{itemize}
  \item[2.2] Update $d_i$ to be $\bar d_k$, and update $g =
    g+\sum_{j=1}^{n_{i}}g_{ij}$, $h= h+\sum_{j=1}^{n_i}h_{ij}$.
  \end{itemize}
\item Set $t = \sum_{i=1}^{m}\sum_{j=1}^{n_i}a_{ij}/(b_{ij}d_i^j)$,
  and return $g,h,t$.
\end{enumerate}

\begin{theorem}\label{THM:remsum}
  Let $r$ and $s$ be two $(\sigma_y,\sigma_z)$-remainders. Then the
  remainder linearization correctly finds two rational functions
  $g,h\in \set F(y,z)$ and a $(\sigma_y,\sigma_z)$-remainder $t\in
  \set F(y,z)$ such that \eqref{EQ:s2t} holds and $c_1 r+c_2 t$ is a 
  $(\sigma_y,\sigma_z)$-remainder for all $c_1,c_2\in \set F$.
\end{theorem}
\begin{proof}
  Since both $r$ and $s$ are $(\sigma_y,\sigma_z)$-remainders, they
  admit the remainder forms \eqref{EQ:rsforms}.  For any $d_i$ from
  $s$, if there exists some $\bar d_k$ from $r$ such that $\bar d_k =
  \sigma_y^\lambda\sigma_z^\mu(d_i)$ for some $\lambda,\mu\in\set Z$,
  then for each integer $j$ with $1\leq j\leq n_i$, one sees from
  Lemma~\ref{LEM:localremainder} that step~2.1.1 correctly finds the
  $g_{ij},h_{ij}$ such that \eqref{EQ:simplefrac} holds. Moreover,
  $\sigma_y^\lambda\sigma_z^\mu(a_{ij})/(\sigma_y^\lambda(b_{ij})\bar d_k^j)$
  is a remainder fraction if $d_i$ is not $(y,z)$-integer
  linear over~$\set F$.  When $d_i$ is $(y,z)$-integer linear over
  $\set F$ of $(\alpha_i,\beta_i)$-type, Lemma~\ref{LEM:bishiftcoprime}
  assures that \eqref{EQ:bishift} holds and $a_{ij}'/(b_{ij}'\bar d_k^j)$
  is a remainder fraction. Note that $d_1,\dots,d_m$ are pairwise
  $(y,z)$-shift inequivalent since $s$ is a $(\sigma_y,\sigma_z)$-remainder.
  Also note that each $d_i$ can only be replaced by some $\bar d_k$
  which is $(y,z)$-shift equivalent to $d_i$ every time the algorithm
  passes through step~2.2. Thus the updated $d_i$ after step~2 remain
  to be $(y,z)$-shift inequivalent to each other. It then follows from
  Definition~\ref{DEF:remainder} that
  $t=\sum_{i=1}^m\sum_{j=1}^{n_i}a_{ij}/(b_{ij}{d_i}^j)$ in step~3
  (with a slight abuse of notations) is a $(\sigma_y,\sigma_z)$-remainder.
  Substituting all equations \eqref{EQ:simplefrac}-\eqref{EQ:bishift}
  into \eqref{EQ:rsforms}, together with \eqref{EQ:integerlinearsummable},
  immediately yields \eqref{EQ:s2t}.
 
  Let $c_1,c_2\in \set F$. Then it remains to prove that $c_1r+c_2 t$
  is a $(\sigma_y,\sigma_z)$-remainder. Notice that for any two
  remainder fractions: $\bar a_{kj}/(\bar b_{kj}\bar d_k^j)$ from $r$
  and $a_{ij}/(b_{ij}d_i^j)$ from $t$ with $\bar d_{k}\nsim_{y,z} d_i$,
  it is readily seen from definition that their any linear combination
  over $\set F$ is again a remainder fraction.  Thus it amounts to
  showing that $c_1\bar a_{kj}/(\bar b_{kj}\bar d_k^j)+c_2a_{ij}/(b_{ij}d_i^j)$
  is a remainder fraction in the case when $\bar d_k\sim_{y,z}d_i$. We
  know from step~2 that in this case $d_i = \bar d_k$, and
  $\phi_{\alpha_i,\beta_i}(c_1\bar a_{kj}/\bar b_{kj}+c_2a_{ij}/b_{ij})$
  is a $\sigma_y$-remainder if $d_i$ is $(y,z)$-integer linear over
  $\set F$ of $(\alpha_i,\beta_i)$-type. Therefore, the theorem is
  concluded by definition.
  \end{proof}

\section{Telescoping via reduction}\label{SEC:rct}
Recall that a telescoper $L$, for a given rational function $f\in \set
F(y,z)$, is a nonzero operator in $\set F[\shift_x]$ such that $L(f)$
is $(\sigma_y,\sigma_z)$-summable.  For discrete trivariate rational
functions, telescopers do not always exist.  Recently, a criterion for
determining the existence of telescopers in this case was presented in
the work \cite{CHLW2016}.
In order to adapt it into our setting, we will consider
primitive parts of polynomials in $\set F[y]$. 
Let $p\in\set F[y]$ be of the form $p = \sum_{i=0}^d(a_i/b)y^i$ for
$d\in\set N$ and $a_i,b\in\set K[x]$ with $b\neq 0$. Then the 
{\em content} $\cont_y(p)$ of $p$ with respect to $y$ is defined as 
$\cont_y(p) = \gcd(a_0,\dots,a_d)/b\in\set F$, and the corresponding 
{\em primitive part} $\prim_y(p)=p/\cont_y(p)$. For example,
by letting $p = 3xy-9x/(x+1)\in\set F[y]$, we have
$\cont_y(p) = 3x/(x+1)\in\set F$ and $\prim_y(p)=(x+1)y-3\in\set K[x,y]$.
Evidently, $\prim_{y}(p)$ is a polynomial in $\set K[x,y]$
whose coefficients with respect
to $y$ have no nonconstant common divisors in $\set K[x]$.

 We summarize the existence criterion for telescopers from \cite{CHLW2016} 
 in the following theorem. 
\begin{theorem}[Existence criterion]\label{THM:existence}
  Let $f$ be a rational function in $\set F(y,z)$. Assume that
  applying the bivariate Abramov reduction to $f$ yields
  \eqref{EQ:biabred}, where $g,h,r\in \set F(y,z)$ and $r$ is a
  $(\sigma_y,\sigma_z)$-remainder with the remainder form
  \eqref{EQ:remainder}. Then $f$ has a telescoper if and only if for
  each $1\leq i\leq m$ and $1\leq j\leq n_i$,
  \begin{itemize}
  \item[(i)] there exists a positive integer $\xi_i$ such that
    $\sigma_x^{\xi_i}(d_i) = \sigma_y^{\zeta_i}\sigma_z^{\eta_i}(d_i)$
    for some integers $\zeta_i,\eta_i$,
  \item[(ii)] and $b_{ij}$ is $(x,y)$-integer linear over~$\set K$, in
    particular, $\sigma_x^{\xi_i}(\prim_y(b_{ij})) =
    \sigma_y^{\zeta_i}(\prim_y(b_{ij}))$ if $d_i$ is not $(y,z)$-integer
    linear over $\set F$.
  \end{itemize}
\end{theorem}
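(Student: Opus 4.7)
The plan is to establish both directions of the equivalence. By Theorem \ref{THM:biabred}, $f - r$ is $(\sigma_y, \sigma_z)$-summable, so $L(f) - L(r) = L(f - r)$ is summable for any $L \in \set F[\shift_x]$; hence $L$ is a telescoper for $f$ if and only if it is a telescoper for $r$, and I would work throughout with the remainder form $r = \sum_{i=1}^{m} \sum_{j=1}^{n_i} a_{ij}/(b_{ij} d_i^j)$.

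For necessity, suppose $L = \sum_{k=0}^{\rho} c_k \shift_x^k$ is a telescoper for $r$. First I would iteratively apply the remainder linearization (Theorem \ref{THM:remsum}) to the shifts $\sigma_x^k(r)$ to produce $(\sigma_y,\sigma_z)$-remainders $t_k$ of $\sigma_x^k(r)$ such that the combination $R_L := c_0 t_0 + \cdots + c_\rho t_\rho$ is itself a $(\sigma_y,\sigma_z)$-remainder for $L(r)$. Since $L(r)$ is summable, Proposition \ref{PROP:remainder} forces $R_L = 0$, which in turn forces cancellation among the shifted remainder fractions. Looking at denominators, every nontrivial cancellation requires some $\sigma_x^k(d_i)$ to be $(y,z)$-shift equivalent to some $d_{i'}$; since the $d_i$ are pairwise $(y,z)$-shift inequivalent, the $\sigma_x$-orbit of each $d_i$ modulo $\sim_{y,z}$ must be finite, yielding condition~(i) with minimal period $\xi_i$. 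Once the denominators are aligned using~(i), the residual vanishing condition for $R_L$ becomes an $\set F$-linear dependence among the numerator/$b_{ij}$-pairs attached to the various $\sigma_x$-iterates of a fixed $d_i$; the minimality statements in Propositions \ref{PROP:taudecomp} and \ref{PROP:minimality} then translate this dependence into the $(x,y)$-integer-linearity of $b_{ij}$, and, when $d_i$ is not $(y,z)$-integer linear (so that no $\sigma_y^\beta \sigma_z^{-\alpha}$-reduction is available to further alter $b_{ij}$), into the sharper period-matching $\sigma_x^{\xi_i}(\prim_y(b_{ij})) = \sigma_y^{\zeta_i}(\prim_y(b_{ij}))$.

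For sufficiency, I would construct a telescoper for each remainder fraction separately and then combine. Fix $(i,j)$. Condition (i) gives $\sigma_x^{\xi_i}(d_i) = \sigma_y^{\zeta_i} \sigma_z^{\eta_i}(d_i)$, so applying $\shift_x^{k\xi_i}$ to $a_{ij}/(b_{ij} d_i^j)$ and then using \eqref{EQ:integerlinearsummable} to absorb the resulting $(\sigma_y,\sigma_z)$-shift of the denominator places the successive iterates, modulo summability, inside a one-parameter family whose denominator is a $\sigma_y^\beta$-free multiple of $d_i^j$. Condition (ii) then caps the $\set F$-dimension of this family: the $(x,y)$-integer-linearity of $b_{ij}$ (and, when needed, the period-matching of $\prim_y(b_{ij})$) makes the $\sigma_x^{k\xi_i}(b_{ij})$ fit in a finite-dimensional $\set F$-vector space modulo $\sigma_y^\beta$-shifts, so Gaussian elimination over $\set F$ extracts a nontrivial relation and yields an operator $L_{ij} \in \set F[\shift_x]$ that is a telescoper for $a_{ij}/(b_{ij} d_i^j)$. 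The product of the $L_{ij}$'s (they commute, since $\set F[\shift_x]$ is commutative) will then be a telescoper for $r$, and hence for $f$.

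The hardest part, I expect, will be the bookkeeping in the necessity direction for condition (ii). Each application of the remainder linearization modifies not only the $a_{ij}$ but also the $b_{ij}$, through the $\sigma_y^\beta$-free polynomial produced in Lemma \ref{LEM:bishiftcoprime} and the compatibility constraint $\gcd(b^*, \sigma_y^{\beta\ell}(b'_{ij})) = 1$. Proving that, despite these modifications, the original $b_{ij}$ must already be $(x,y)$-integer linear and its primitive part must be period-compatible with the $d_i$-period $\xi_i$ requires carefully separating genuine obstructions from removable summable corrections. At this step I would lean heavily on the algebraic characterization of summability from \cite{CHLW2016} rather than re-deriving the existence theory from scratch.
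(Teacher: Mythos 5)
First, a point of comparison: the paper does not prove this theorem at all. It is explicitly presented as a summary of the main result of \cite{CHLW2016} (``We summarize this in the following theorem''), so there is no in-paper argument to match yours against; any proof must either come from that reference or be reconstructed from scratch. Your sketch attempts the latter, and its overall architecture --- reduce to the remainder $r$, force the combined remainder $R_L$ to vanish via Proposition~\ref{PROP:remainder}, read off condition~(i) from finiteness of the $\sigma_x$-orbit of each $d_i$ modulo $\sim_{y,z}$, and prove sufficiency by a finite-dimensionality argument on numerators after aligning denominators --- is the right shape and consistent with how the paper later uses the criterion in Algorithm \textbf{ReductionCT}.

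There are, however, two genuine gaps. The first is outright false as written: $\set F[\shift_x]$ is \emph{not} commutative (since $\shift_x\, x = (x+1)\shift_x$ with $\set F = C(x)$), so the product of the $L_{ij}$ is not in general a telescoper for the sum of the remainder fractions: $L_1L_2(r_1)$ need not be summable even when $L_1(r_1)$ is, because you cannot slide $L_2$ past $L_1$. What is needed is a common \emph{left} multiple, exactly as in Lemma~\ref{LEM:lclm} and Proposition~\ref{PROP:lclm}; the fix is routine but your stated justification must be replaced. The second and more serious gap is the necessity of condition~(ii): the claim that the vanishing of $R_L$ ``translates\ldots into the $(x,y)$-integer-linearity of $b_{ij}$'' via Propositions~\ref{PROP:taudecomp} and~\ref{PROP:minimality} is precisely the hard content of the theorem, and neither proposition delivers it --- they give degree minimality of denominators of remainders, but say nothing about why a nontrivial $\set F$-linear relation among the aligned fractions $\sigma_x^{k\xi_i}(a_{ij})/\sigma_x^{k\xi_i}(b_{ij})$ modulo $\tau$-summability forces the various $\sigma_x^{k\xi_i}(b_{ij})$ to be $\sigma_y$-shifts of one another up to content, which is what $(x,y)$-integer linearity and the period-matching $\sigma_x^{\xi_i}(\prim_y(b_{ij})) = \sigma_y^{\zeta_i}(\prim_y(b_{ij}))$ amount to. Your closing remark that you would ``lean heavily on the algebraic characterization of summability from \cite{CHLW2016}'' makes the argument circular, since \cite{CHLW2016} is the very source of the theorem being proved. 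As it stands, the necessity of~(ii) is asserted rather than established.
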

Algorithms for checking the conditions (i)-(ii) were also described in
the same paper \cite{CHLW2016}.  With termination guaranteed by the above
criterion, we now use the bivariate Abramov reduction to develop a
creative telescoping algorithm in the spirit of the general
reduction-based approach.

\medskip\noindent{\bf Algorithm ReductionCT.} Given a rational function $f\in
\set F(y,z)$, compute a  minimal telescoper $L\in \set F[\shift_x]$
for~$f$ and a corresponding certificate $(g,h)\in \set F(y,z)^2$
when telescopers exist.

\begin{enumerate}
\item Apply the bivariate Abramov reduction to $f$ to find
  $g_0,h_0\in \set F(y,z)$ and a $(\sigma_y,\sigma_z)$-remainder
  $r_0\in \set F(y,z)$ such that
  \begin{equation}\label{EQ:0red}
    f = (\shift_y-1)(g_0)+(\shift_z-1)(h_0)+r_0.
  \end{equation}
\item If $r_0=0$ then set $L = 1$, $(g,h)=(g_0,h_0)$ and return.
\item If conditions~(i)-(ii) in Theorem~\ref{THM:existence} are not
  satisfied, then return \lq\lq No telescopers exist\rq\rq.
\item Set $R = u_0r_0$, where $u_0$ is an indeterminate.\\[1ex]
  For $\ell = 1, 2, \dots $ do
  \begin{itemize}

  \item[4.1] Apply the remainder linearization to
    $\sigma_x(r_{\ell-1})$ with respect to $R$ to find
    $g_\ell,h_\ell\in \set F(y,z)$ and a
    $(\sigma_y,\sigma_z)$-remainder $r_\ell\in \set F(y,z)$ such that
    \begin{equation}\label{EQ:ellthred2}
      \sigma_x(r_{\ell-1}) = (\shift_y-1)(g_\ell)+(\shift_z-1)(h_\ell)+r_\ell,
    \end{equation}
    and that $R+u_\ell r_\ell$ is a $(\sigma_y,\sigma_z)$-remainder,
    where $u_\ell$ is an indeterminate.
    
  \item[4.2] Update $R = R+u_\ell r_\ell$ and update
    $g_\ell=g_\ell+\sigma_x(g_{\ell-1})$, $h_\ell=h_\ell+\sigma_x(h_\ell)$
    so that
    \begin{equation}\label{EQ:ellthred}
      \sigma_x^\ell(f) = (\shift_y-1)(g_\ell)+(\shift_z-1)(h_\ell)+r_\ell.
    \end{equation}
    
  \item[4.3] If there exist nontrivial $c_0,\dots, c_\ell\in \set F$
    such that $R\mid_{u_i=c_i}=0$, then set $L=\sum_{i=0}^\ell c_i\shift_x^i$
    and $(g,h) = (\sum_{i=0}^\ell c_i g_i, \sum_{i=0}^\ell c_i h_i)$,
    and return.
  \end{itemize}
\end{enumerate}
\begin{theorem}\label{THM:rct}
  Let $f$ be a rational function in $\set F(y,z)$. Then the algorithm
  {\bf ReductionCT} terminates and returns a minimal telescoper for
  $f$ when such a telescoper exists.
\end{theorem}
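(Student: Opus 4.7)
The plan is to establish correctness, termination, and minimality in turn, relying on two loop invariants. By induction on $\ell$, one first shows that after iteration $\ell$ of the main loop,
\[\sigma_x^\ell(f) = (\shift_y - 1)(g_\ell) + (\shift_z - 1)(h_\ell) + r_\ell.\]
The base case $\ell=0$ is~\eqref{EQ:0red} from step~1, and the inductive step follows from applying $\sigma_x$ to the invariant at level $\ell-1$, substituting~\eqref{EQ:ellthred2}, and performing the update in step~4.2. Consequently, once step~4.3 detects a nontrivial dependence $\sum_{i=0}^\ell c_i r_i = 0$, taking the corresponding $\set F$-linear combination of these invariants yields $L(f) = (\shift_y - 1)(g) + (\shift_z - 1)(h)$, so the returned $L$ is a telescoper with certificate $(g,h)$.

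The second, and technically more delicate, invariant is that the running sum $R = c_0 r_0 + \cdots + c_\ell r_\ell$ remains a $(\sigma_y,\sigma_z)$-remainder for every specialization of the indeterminates $c_i \in \set F$. This follows by induction directly from Theorem~\ref{THM:remsum}: the remainder linearization applied in step~4.1 produces an $r_\ell$ for which $R + c_\ell r_\ell$ is still a remainder under arbitrary $\set F$-specialization. I expect this to be the main obstacle, because the denominators appearing in $R$ depend on the $c_i$ symbolically, and one must verify that the linearization operates on the support of denominators in a manner that is uniform across all specializations. Once this invariant is in place, Proposition~\ref{PROP:remainder} becomes the decisive tool: any $\set F$-linear combination of the $r_i$'s that is $(\sigma_y,\sigma_z)$-summable must equal zero.

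With both invariants available, termination and minimality follow together. Suppose a telescoper exists and let $\rho$ be its minimal order, with $L' = \sum_{i=0}^\rho c'_i \shift_x^i$ a minimal telescoper. Combining the first invariant with the definition of telescoper gives
\[L'(f) = (\shift_y - 1)\!\left(\sum_{i=0}^\rho c'_i g_i\right) + (\shift_z - 1)\!\left(\sum_{i=0}^\rho c'_i h_i\right) + \sum_{i=0}^\rho c'_i r_i,\]
so $\sum_{i=0}^\rho c'_i r_i$ is $(\sigma_y,\sigma_z)$-summable; the second invariant combined with Proposition~\ref{PROP:remainder} then forces it to be zero, and step~4.3 halts at $\ell=\rho$. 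Halting at any $\ell_0 < \rho$ is impossible, because the detected dependence would itself produce a telescoper of order at most $\ell_0$, contradicting minimality of $\rho$. Hence the returned telescoper has order exactly $\rho$. Finally, the case where no telescoper exists is filtered out cleanly by step~3 via the existence criterion of Theorem~\ref{THM:existence}.
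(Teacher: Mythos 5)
Your proposal is correct and follows essentially the same route as the paper: the same two loop invariants (the telescoping identity~\eqref{EQ:ellthred} by induction, and the fact that $R=\sum_i c_i r_i$ stays a $(\sigma_y,\sigma_z)$-remainder for all specializations via Theorem~\ref{THM:remsum}), followed by the same termination/minimality argument that a minimal telescoper of order $\rho$ forces a nontrivial dependence at $\ell=\rho$ while an earlier halt would contradict minimality. The only cosmetic difference is that you invoke Proposition~\ref{PROP:remainder} directly where the paper cites Theorem~\ref{THM:biabred}; these carry the same content here.
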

\begin{proof}
  By Theorems~\ref{THM:biabred} and \ref{THM:existence}, steps~2-3 are
  correct. Because $r_0$ is a $(\sigma_y,\sigma_z)$-remainder, so is its shift
  $\sigma_x(r_0)$. By
  Theorem~\ref{THM:remsum}, step~4.1 correctly finds $g_1,h_1\in \set
  F(y,z)$ and a $(\sigma_y,\sigma_z)$-remainder $r_1\in \set F(y,z)$
  such that \eqref{EQ:ellthred2} holds for $\ell=1$ and
  $R+u_1r_1=u_0r_0+u_1r_1$ is a $(\sigma_y,\sigma_z)$-remainder for
  all $u_0, u_1\in \set F$.  Applying $\sigma_x$ to both sides of
  \eqref{EQ:0red}, together with step~4.1, one sees that step~4.2
  gives \eqref{EQ:ellthred} for $\ell=1$.  The correctness of step~4.2
  for each iteration of the loop in step~4 then follows by induction
  on $\ell$.
  
  If $f$ does not have a telescoper then the algorithm halts after
  step~3. Otherwise, telescopers for $f$ exist
  by~Theorem~\ref{THM:existence}. Let $L= \sum_{\ell=0}^\rho c_\ell
  \shift_x^\ell\in \set F[\shift_x]$ be a telescoper for $f$ of
  minimal order. Then $c_\rho\neq 0$ and by~\eqref{EQ:ellthred},
  applying $L$ to $f$ gives
  \[L(f)=\sum_{\ell=0}^\rho c_\ell\sigma_x^\ell(f)
  = (\shift_y-1)\left(\sum_{\ell=0}^\rho c_\ell g_\ell\right)
  +(\shift_z-1)\left(\sum_{\ell=0}^\rho c_\ell h_\ell\right)
  +\sum_{\ell=0}^\rho c_\ell r_\ell.\]
  Notice that $\sum_{\ell=0}^\rho c_\ell r_\ell$ is a
  $(\sigma_y,\sigma_z)$-remainder by step~4.1. It follows from
  Theorem~\ref{THM:biabred} that $L(f)$ is
  $(\sigma_y,\sigma_z)$-summable, namely $L$ is a telescoper for $f$,
  if and only if $\sum_{\ell=0}^\rho c_\ell r_\ell = 0$. This implies
  that the linear system over $\set F$ with unknowns $u_\ell$ obtained by equating
  $\sum_{\ell=0}^\rho u_\ell r_\ell$ to zero has a nontrivial
  solution, which yields a telescoper of minimal order. The algorithm
  thus terminates.
\end{proof}

In what follows, we describe an alternative way, in addition to the
above algorithm, for creative telescoping in our trivariate rational
setting. As such, we need the notion of least common left multiples.
Recall that an operator $L\in \set
F[\shift_x]$ is a {\em common left multiple} of operators $L_1,\dots,
L_m\in \set F[\shift_x]$ if there exist operators $L_1',\dots,L_m'\in
\set F[\shift_x]$ such that $L = L'_1L_1=\cdots = L_m'L_m$. Amongst
all such common left multiples, the monic one of lowest possible
degree with respect to $\shift_x$ is called the
{\em least common left multiple}. Many efficient algorithms for
computing the least common left multiple of operators are available;
see~\cite{ALL2005} and the references therein.

The following lemma is an immediate extension of \cite[Theorem~2]{Le2003a}
to the trivariate case, and thus we omit the proof.
\begin{lemma}\label{LEM:lclm}
  Let $r = r_1+\cdots+r_m$ with $r_i\in \set F(y,z)$ and let
  $L_1,\dots,L_m\in \set F[\emph\shift_x]$ be the respective minimal
  telescopers for $r_1, \dots, r_m$. Then the least common left
  multiple $L$ of the $L_i$ is a telescoper for $r$. Moreover, if any
  telescoper for $r$ is also a telescoper for each $r_i$ with $1\leq
  i\leq m$, then $L$ is a minimal telescoper for $r$.
\end{lemma}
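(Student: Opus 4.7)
The plan is to exploit the Ore-algebra structure of $\set F[\shift_x]$, using the facts that (i) the summable rational functions form a left $\set F[\shift_x]$-module and (ii) for any fixed $f$, the set of its telescopers, together with the zero operator, forms a principal left ideal of $\set F[\shift_x]$ whose generator is (up to a unit) any minimal telescoper.

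For the first part, I would begin by writing $L=L'_i L_i$ for operators $L'_i\in\set F[\shift_x]$, as guaranteed by the definition of the least common left multiple. Since $L_i$ is a telescoper for $r_i$, there exist $g_i,h_i\in\set F(y,z)$ such that $L_i(r_i)=(\shift_y-1)(g_i)+(\shift_z-1)(h_i)$. Because $\shift_x$ commutes with both $\shift_y$ and $\shift_z$, $L'_i$ commutes with $\shift_y-1$ and $\shift_z-1$, so
\[
L(r_i)=L'_i\bigl((\shift_y-1)(g_i)+(\shift_z-1)(h_i)\bigr)=(\shift_y-1)(L'_i(g_i))+(\shift_z-1)(L'_i(h_i)),
\]
showing $L(r_i)$ is summable for each $i$. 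Since the summable rational functions are closed under addition, $L(r)=\sum_i L(r_i)$ is summable, hence $L$ is a telescoper for $r$.

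For the second part, I would fix a minimal telescoper $\tilde L$ for $r$, which by hypothesis is also a telescoper for each $r_i$. The argument above (commutation of $\shift_x$ with $\shift_y,\shift_z$) shows that the set of telescopers for any fixed $r_i$, together with $0$, is a left ideal of the principal left ideal domain $\set F[\shift_x]$, and this ideal is generated by the minimal telescoper $L_i$. Consequently $\tilde L\in\set F[\shift_x]L_i$ for every $i$, i.e.\ each $L_i$ right-divides $\tilde L$, so $\tilde L$ is a common left multiple of $L_1,\dots,L_m$. By the defining property of the least common left multiple, $L$ right-divides $\tilde L$, and therefore $\deg_{\shift_x}(L)\le\deg_{\shift_x}(\tilde L)$. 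Combined with the minimality of $\tilde L$ among telescopers for $r$ and the fact that $L$ is itself such a telescoper by the first part, this forces $\deg_{\shift_x}(L)=\deg_{\shift_x}(\tilde L)$, so $L$ is a minimal telescoper for $r$.

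There is no real obstacle here; the only point requiring a little care is the passage from ``$\tilde L$ is a telescoper for $r_i$'' to ``$L_i$ right-divides $\tilde L$'', which rests on the principal-ideal structure of $\set F[\shift_x]$ and the characterization of telescopers as a left ideal. Everything else is a direct, essentially one-line manipulation relying on the commutation of $\shift_x$ with $\shift_y$ and $\shift_z$.
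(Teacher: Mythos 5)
Your proof is correct. Note that the paper itself states this lemma without proof, simply citing it as an analogue of Theorem~2 of Le (2003); your argument --- the commutation of $\shift_x$ with $\shift_y-1$ and $\shift_z-1$ for the first claim, and the fact that the telescopers of a fixed function together with $0$ form a principal left ideal of $\set F[\shift_x]$ generated by a minimal telescoper for the second --- is exactly the standard one, and the commutation observation is the same one the paper invokes in its proof of the subsequent proposition.
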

The following proposition shows that the least common multiple gives a
minimal telescoper for the given sum provided that the denominators of
distinct summands are comprised of distinct $(x,y,z)$-shift equivalence
classes.
\begin{proposition}\label{PROP:lclm}
  Let $r\in \set F(y,z)$ be a rational function of the form
  \begin{equation*}
    r = r_1+r_2+\cdots+r_m,
  \end{equation*}
  where $r_i = a_i/d_i$ with $a_i,d_i\in \set F[y,z]$ satisfying the conditions
  \begin{itemize}
  \item[(i)] $\deg_z(a_i)<\deg_z(d_i)$;
  \item[(ii)] any monic irreducible factor of $d_i$ of positive degree
    in $z$ is $(x,y,z)$-shift inequivalent to all factors of $d_{i'}$
    whenever $1\leq i,i'\leq m$ and $i\neq i'$.
  \end{itemize}
  Let $L_1,\dots, L_m\in \set F[\emph\shift_x]$ be the respective minimal
  telescopers for $r_1,\dots, r_m$. Then the least common left
  multiple $L$ of the $L_i$ is a minimal telescoper for $r$. Moreover,
  for each $1\leq i\leq m$, let $(g_{i},h_{i})$ be a corresponding
  certificate for $L_{i}$ and let $L'_{i}\in \set F[\emph\shift_x]$ be the
  cofactor of $L_i$ so that $L = L'_{i}L_{i}$. Then
  \[\left(\sum_{i=1}^mL'_{i}(g_{i}),\sum_{i=1}^mL'_{i}(h_{i})\right)\]
  is a corresponding certificate for $L$.
\end{proposition}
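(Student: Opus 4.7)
The plan is to reduce the claim to Lemma~\ref{LEM:lclm}: that lemma already guarantees that the least common left multiple $L$ of the $L_i$ is a telescoper for $r$, and states that $L$ is a minimal telescoper for $r$ whenever every telescoper for $r$ is simultaneously a telescoper for each $r_i$. So the entire task reduces to establishing this decoupling: if $T\in \set F[\shift_x]$ is any telescoper for $r$, then $T$ is a telescoper for each $r_i$.

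To prove decoupling I would apply the bivariate reduction of Abramov (Theorem~\ref{THM:biabred}) separately to each $T(r_i)$, obtaining
\[
T(r_i) = (\shift_y-1)(\tilde g_i) + (\shift_z-1)(\tilde h_i) + \tilde r_i
\]
with $\tilde r_i$ a $(\sigma_y,\sigma_z)$-remainder. Since $T=\sum_k t_k\shift_x^k$, the denominator of $T(r_i)$ divides a product of $\sigma_x$-shifts of $d_i$, so each of its irreducible factors of positive $z$-degree lies in the $(x,y,z)$-shift class of some factor of $d_i$; inspection of the primary reduction (where the irreducible factors $d_i$ produced are literally factors of the input denominator) together with Lemma~\ref{LEM:localremainder} (the $(\alpha,\beta)$-shift reduction leaves the $d$-factor untouched up to $(y,z)$-shift) shows that the irreducible denominator factors of positive $z$-degree in $\tilde r_i$ remain in those same $(x,y,z)$-shift classes. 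By hypothesis~(ii), these classes are pairwise disjoint as $i$ varies; combined with the intra-$\tilde r_i$ pairwise $(y,z)$-shift inequivalence built into the remainder form, this yields that $\sum_i \tilde r_i$ is itself a $(\sigma_y,\sigma_z)$-remainder of $T(r)$.

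Since $T(r)$ is $(\sigma_y,\sigma_z)$-summable, Theorem~\ref{THM:biabred} forces $\sum_i \tilde r_i = 0$, and the uniqueness of the remainder form observed right after Definition~\ref{DEF:remainder} then gives $\tilde r_i = 0$ for every $i$, so that each $T(r_i)$ is summable. Lemma~\ref{LEM:lclm} therefore yields that $L$ is a minimal telescoper for $r$. For the certificate, since $L'_i\in \set F[\shift_x]$ commutes with $\shift_y-1$ and $\shift_z-1$, applying $L'_i$ to $L_i(r_i)=(\shift_y-1)(g_i)+(\shift_z-1)(h_i)$ gives $L(r_i)=(\shift_y-1)(L'_i(g_i))+(\shift_z-1)(L'_i(h_i))$, and summing over $i$ produces the desired pair. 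The main obstacle is the decoupling step; within it, the most delicate point is verifying that no denominator factor of positive $z$-degree outside the $(x,y,z)$-shift class of the factors of $d_i$ can be introduced by either the primary reduction or the subsequent $(\alpha,\beta)$-shift reduction on integer-linear factors.
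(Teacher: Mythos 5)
Your proposal is correct, and its skeleton matches the paper's: reduce everything to Lemma~\ref{LEM:lclm} by proving the decoupling statement (any telescoper $T$ for $r$ is a telescoper for each $r_i$), then get the certificate from the fact that $\shift_y-1$ and $\shift_z-1$ commute with $\set F[\shift_x]$. Where you genuinely diverge is in how the decoupling is justified. The paper observes that applying an operator from $\set F[\shift_x]$ preserves $(x,y,z)$-shift equivalence classes, so that hypothesis~(ii) forces the positive-$z$-degree irreducible denominator factors of $T(r_1),\dots,T(r_m)$ to be pairwise $(y,z)$-shift inequivalent, and then simply cites \cite[Lemma~3.2]{HoWa2015}, which says summability of such a sum decouples into summability of the summands. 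You instead re-derive that decoupling from the paper's own machinery: reduce each $T(r_i)$ to a remainder $\tilde r_i$, check that the positive-$z$-degree denominator factors stay inside the original shift classes (true, since the primary reduction and Lemma~\ref{LEM:localremainder} only move factors within $(y,z)$-shift classes and only alter the $\set F[y]$ parts), conclude $\sum_i\tilde r_i$ is a remainder of the summable function $T(r)$, hence zero, and split it by coprimality of the denominators. This is longer but self-contained; the paper's route is shorter because the external lemma does exactly this work. One small imprecision: the step forcing $\sum_i\tilde r_i=0$ is really Proposition~\ref{PROP:remainder} (a nonzero remainder is never summable) rather than Theorem~\ref{THM:biabred}, which speaks only about the output of the specific algorithm applied to $T(r)$; the logic is unaffected.
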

\begin{proof}
  Let $\tilde L\in \set F[\shift_x]$ be a telescoper for $r$. In order
  to show the first assertion, by Lemma~\ref{LEM:lclm}, it suffices to
  verify that $\tilde L$ is also a telescoper for each $r_{i}$ with
  $1\leq i\leq m$. Notice that the application of a nonzero operator
  from $\set F[\shift_x]$ does not change the $(x,y,z)$-shift
  equivalence classes, with representatives being monic irreducible
  polynomials of positive degrees in $z$, that appear in a given
  polynomial in $\set F[y,z]$. Hence condition~(ii) remains valid when
  $d_i$ and $d_{i'}$ are replaced by $\tilde L(d_i)$ and $\tilde
  L(d_{i'})$, respectively.  It then follows that any two monic
  irreducible factors of positive degrees in $z$ from distinct $d_{i}$
  are $(y,z)$-shift inequivalent to each other. By the definition of
  telescopers, $\tilde L(r)$ is $(\sigma_y,\sigma_z)$-summable, and
  then so is each $\tilde L(r_{i})$ according to
  \cite[Lemma~3.2]{HoWa2015}.  This implies that $\tilde L$ is indeed
  a telescoper for each $r_{i}$. The second assertion follows by
  observing that $(\shift_y-1)$ and $(\shift_z-1)$ both commute with
  operators from $\set F[\shift_x]$.
\end{proof}

The above proposition suggests a natural approach to construct a
minimal telescoper for a given rational function. More precisely, let
$f\in\set F(y,z)$ and assume that applying the bivariate Abramov
reduction to $f$ yields \eqref{EQ:biabred} with $r$ admitting the
remainder form \eqref{EQ:remainder}.  The approach proceeds by
separately taking each $\sum_{j=1}^{n_i}a_{ij}/(b_{ij}d_i^j)$ in
\eqref{EQ:remainder} as the basic case and computes its own minimal
telescoper $L_i\in\set F[\shift_x]$ using the algorithm {\bf ReductionCT},
and then returns the least common left multiple $L$ of all $L_i$ as
the output.  By Proposition~\ref{PROP:lclm}, such an $L$ gives a
minimal telescoper for $r$ (and thus for $f$). We refer to this
approach as the LCLM version of our algorithm {\bf ReductionCT}.

\subsection{Examples}

\begin{example}\label{EX:f1}
  Consider the rational function $f_1$ given in
  Example~\ref{EX:example}.  Note that $f_1$ is a remainder fraction
  and satisfies conditions~(i)-(ii) in Theorem~\ref{THM:existence}. So
  telescopers for $f_1$ exist. Applying the algorithm {\bf ReductionCT}
  to $f_1$, we obtain in step~4 that
  \[\sigma_x^\ell(f_1) = (\shift_y-1)(g_\ell)+(\shift_z-1)(h_\ell)+r_\ell
  \quad\text{for}\ \ell=0,1, 2,\]
  where
  \begin{align*}
    r_0=f_1, \quad
    r_1 = \frac{(x+1)^2(z-1)+1}{(x+y)(x+z)^2+1},\quad
    r_2 = \frac{(x+2)^2(z-2)+1}{(x+y)(x+z)^2+1}
  \end{align*}
  and $g_\ell,h_\ell\in \set F(y,z)$ are not displayed here to keep
  things neat. By finding an $\set F$-linear dependency among
  $r_0,r_1,r_2$, we see that
  \begin{align*}
    L_1&=(x^4+2x^3+x^2+2x+1)\emph\shift_x^2
    -2(x^4+4x^3+4x^2+2x+2)\emph\shift_x+(x^4+6x^3+13x^2+14x+7)
  \end{align*}
  is a minimal telescoper for $f_1$.
\end{example}

\begin{example}\label{EX:f3}
  Consider the rational function $f_2$ given in
  Example~\ref{EX:example}, which can be decomposed as
  \[f_2 = (\shift_y-1)(g_0)+(\shift_z-1)(h_0)+\underbrace{
    \frac{\frac13x(2y+3z)+\frac23x^2+1}{(x+y)((x+2y+3z)^2+1)}}_{r_0}
  \quad \text{for some} \ g_0,h_0\in \set F(y,z).\]
  Note that $r_0$ is a remainder fraction and satisfies
  conditions~(i)-(ii) in Theorem~\ref{THM:existence}. Thus telescopers
  for $f_2$ exist. Applying the algorithm {\bf ReductionCT} to $f_2$,
  we obtain in step~4 that
  \[\sigma_x^\ell(f_2)=(\shift_y-1)(g_\ell)+(\shift_z-1)(h_\ell)+r_\ell
  \quad\text{for}\ \ell=0,1,\dots,6,\]
  where $g_\ell,h_\ell\in \set F(y,z)$ are again not displayed due to
  the large sizes, and
  \begin{align*}
    \begin{array}{lll}
      r_1 = \frac{(\frac13x+1)(2y+3z)+\frac23x^2+x+\frac43}
      {(x+y+2)((x+2y+3z)^2+1)},&~
      r_2 = \frac{(\frac13x+\frac23)(2y+3z)+\frac23x^2+2x+\frac73}
      {(x+y+4)((x+2y+3z)^2+1)},&~
      r_3 = \frac{(\frac13x+1)(2y+3z)+\frac23x^2+4}
      {(x+y)((x+2y+3z)^2+1)},\\[1.5ex]
      r_4 = \frac{(\frac13x+\frac43)(2y+3z)+\frac23x^2+4x+\frac{19}3}
      {(x+y+2)((x+2y+3z)^2+1)},&~
      r_5 = \frac{(\frac13x+\frac53)(2y+3z)+\frac23x^2+5x+\frac{28}3}
      {(x+y+4)((x+2y+3z)^2+1)},&~
      r_6 = \frac{(\frac13x+2)(2y+3z)+\frac23x^2+6x+13}
      {(x+y)((x+2y+3z)^2+1)}.
    \end{array}
  \end{align*}
  Then one finds an $\set F$-linear dependency among $r_0,r_3,r_6$
  which yields a minimal telescoper
  \[L_2 = (x^2+3x-3)\emph\shift_x^6-2(x^2+6x-3)\emph\shift_x^3+x^2+9x+15.\]
\end{example}

The following illustrates the result of  Proposition \ref{PROP:lclm}.

\begin{example}\label{EX:rct}
  Consider the same rational function $f$ as in
  Example~\ref{EX:example}. Then we know that $f_3$ is
  $(\sigma_y,\sigma_z)$-summable. Thus $L_3 =1$ is a minimal
  telescoper for $f_3$. Let $L_1,L_2\in \set F[\shift_x]$ be the
  operators computed in Examples~\ref{EX:f1}-\ref{EX:f3}. It then
  follows that the least common left multiple $L$ of
  $\{L_1,L_2,L_3\}$, given by
  \begin{align*}
  L &= \emph\shift_x^{8}
  -\tfrac{2(x^2+5x+1)(3x^2+24x+31)}{(x^2+7x+7)(3x^2+21x+19)}\emph\shift_x^7
  +\tfrac{(x^2+3x-3)(3x^2+27x+43)}{(x^2+7x+7)(3x^2+21x+19)}\emph\shift_x^6
  -\tfrac{2(x^2+10x+13)}{x^2+7x+7}\emph\shift_x^5\\[1ex]
  & \quad
  +\tfrac{4(3x^2+24x+31)(x^2+8x+4)}{(x^2+7x+7)(3x^2+21x+19)}\emph\shift_x^4
  -\tfrac{2(x^2+6x-3)(3x^2+27x+43)}{(x^2+7x+7)(3x^2+21x+19)}\emph\shift_x^3
  +\tfrac{x^2+13x+37}{x^2+7x+7}\emph\shift_x^2\\[1ex]
  &\quad
  -\tfrac{2(x^2+11x+25)(3x^2+24x+31)}{(x^2+7x+7)(3x^2+21x+19)}\emph\shift_x
  +\tfrac{(x^2+9x+15)(3x^2+27x+43)}{(x^2+7x+7)(3x^2+21x+19)},
  \end{align*}
  is a telescoper for $f$. On the other hand, by directly applying the
  algorithm {\bf ReductionCT} to $f$, one sees that $L$ is in fact a
  minimal telescoper for $f$.
\end{example}

\subsection{Efficiency considerations}

The efficiency of Algorithm {\bf ReductionCT} can be enhanced by incorporating  
two modifications in the algorithm.

\bigskip

\noindent
{\bf Simplification of step 4.1}

\bigskip
For each iteration of the loop in step~4, rather than using the overall 
$(\sigma_y,\sigma_z)$-remainder $R=\sum_{k=0}^{\ell-1}u_k r_k$ in step~4.1, we 
can apply the remainder linearization to the shift value $\sigma_x(r_{\ell-1})$ 
with respect to the initial $(\sigma_y,\sigma_z)$-remainder $r_0$ only.  This is
sufficient as, for any $(\sigma_y,\sigma_z)$-remainder $r_\ell$
of $\sigma_x(r_{\ell-1})$ with $\ell\geq 1$, if $r_0+r_\ell$ is a
$(\sigma_y,\sigma_z)$-remainder then so is $R+u_\ell r_\ell$, provided
that the algorithm proceeds in the described iterative
fashion. 

The intuition for this simplification is as follows. Notice that
if the algorithm continues after passing through step~3 then $r_0\neq 0$. 
Since distinct $(y,z)$-shift equivalence classes can be tackled separately, 
we restrict ourselves to the case where the denominator of $r_0$ is of the 
form
\[d\,\sigma_x^{i_1}(d) \cdots\sigma_x^{i_m}(d)\]
with $d\in \set F[y,z]$ being monic, irreducible and of positive
degree in $z$, $i_1,\dots,i_m$ being distinct positive integers such
that $d, \sigma_x^{i_1}(d),\dots, \sigma_x^{i_m}(d)$ are $(y,z)$-shift
inequivalent to each other. For simplicity, we call
$(0,i_1,\dots,i_m)$ the $x$-shift exponent sequence of $d$ in
$r_0$. By Theorem~\ref{THM:existence}, there exists a positive integer
$\xi$ such that $\sigma_x^\xi(d)\sim_{y,z} d$ and so we let $\xi$ be the
smallest one with such a property. Then there are only $\xi$ many
$(y,z)$-shift equivalence classes produced by shifting $d$ with
respect to $x$, with $d,\sigma_x(d), \dots,\sigma_x^{\xi-1}(d)$ as
respective representatives. Without loss of generality, we further
assume that $0<i_1<\cdots<i_m<\xi$. For $\ell\geq 1$, let $r_\ell$ be
the output of the remainder linearization when applied to
$\sigma_x(r_{\ell-1})$ with respect to $r_0$. By induction on $\ell$,
one sees that the $x$-shift exponent sequence of $d$ in $r_\ell$ is
given by
\[(\ell, i_1+\ell,\dots, i_m+\ell) \mod \xi,\]
whose entries form an $(m+1)$-subset of $\{0,1,\dots,\xi-1\}$. It thus
follows from Definition~\ref{DEF:remainder} that $R+u_\ell r_\ell$ is
also a $(\sigma_y,\sigma_z)$-remainder.

\bigskip
\noindent
{\bf Simplification of step 4.3}

\bigskip
Our second modification is in step~4.3, where we first derive
from $R=0$ the individual equation for each remainder fraction $a/(b\, d^j)$
appearing in the remainder form of $R$, and then build a linear
system over $\set F$ from the coefficients of the numerator of the
equation with respect to $y$ and $Z = \alpha y+ \beta z$, instead of
$y$ and $z$, in the case where $d$ is $(y,z)$-integer linear of
$(\alpha,\beta)$-type. Notice that $R = u_0r_0+u_1r_1+\dots+u_\ell
r_\ell$ at the stage of step~4.3. Let $d_1,\dots, d_m$ be all monic
irreducible polynomials of positive degrees in $z$ that appear in the
denominator of $R$, with multiplicities $n_1,\dots, n_m$,
respectively. For $1\leq i\leq m$, $1\leq j\leq n_i$ and $0\leq k\leq
\ell$, let $a_{ij}^{(k)}\in \set F[y,z]$ and $b_{ij}^{(k)}\in \set
F[y]$ be such that $a_{ij}^{(k)}/(b_{ij}^{(k)}d_i^j)$ is a remainder
fraction appearing in the remainder form of $r_k$. By coprimeness
among the $d_i$, one gets that
\[
R=0\quad\Longleftrightarrow\quad
\sum_{k=0}^\ell u_k\cdot\frac{a_{ij}^{(k)}}{b_{ij}^{(k)}} = 0
\quad\text{for all}\ i=1,\dots, m\text{ and }j=1,\dots, n_i.
\] 
If $d_i$ is $(y,z)$-integer linear of $(\alpha_i,\beta_i)$-type, then
$\phi_{\alpha_i,\beta_i}(a_{ij}^{(k)}/b_{ij}^{(k)})$ is a $\sigma_y$-remainder
with $\phi_{\alpha_i,\beta_i}$ being the map for $(\alpha_i,\beta_i)$-shift reduction.
By letting $\tilde a_{ij}^{(k)}=\phi_{\alpha_i,\beta_i}(a_{ij}^{(k)})\in\set F[y,z]$,
one sees from definition that $\deg_y(\tilde a_{ij}^{(k)}) <
\deg_y(\phi_{\alpha_i,\beta_i}(b_{ij}^{(k)}))=\deg_y(b_{ij}^{(k)})$
and $a_{ij}^{(k)} = \phi_{\alpha_i,\beta_i}^{-1}(\tilde a_{ij}^{(k)})
=\tilde a_{ij}^{(k)}(\beta_i^{-1}y,\beta_iz+\alpha_iy)$. It follows
that every $a_{ij}^{(k)}$ can be viewed as a polynomial in $Z_i=
\alpha_i y+ \beta_i z$ with coefficients all having degrees in $y$
less than $\deg_y(b_{ij}^{(k)})$. In this case, rather than naively
considering the coefficients with respect to $y$ and $z$, we instead
force all the coefficients with respect to $y$ and $Z_i$ of the
numerator of $\sum_{k=0}^\ell u_k\cdot(a_{ij}^{(k)}/b_{ij}^{(k)})$ to
zero. This way ensures that the resulting linear system over $\set F$
typically has smaller size than the naive one.

\section{Implementation and timings}\label{SEC:test}

We have implemented our new algorithm {\bf ReductionCT} in the
computer algebra system {\sc Maple~2018}. Our implementation includes
the two enhancements to step~4 discussed in the previous subsection.
In order to get an idea about the efficiency of our algorithm, we
applied our implementation to certain examples and tabulated their
runtime in this section. All timings were measured in seconds on a
Linux computer with 128GB RAM and fifteen 1.2GHz Dual core
processors. The computations for the experiments did not use any
parallelism.

We considered trivariate rational functions of the form
\begin{equation}\label{EQ:test}
  f(x,y,z) = \frac{a(x,y,z)}{d_1(x,y,z)\cdot d_2(x,y,z)},
\end{equation}
where
\begin{itemize}
\item $a\in \set Z[x,y,z]$ of total degree $m\geq 0$ and max-norm
  $||a||_\infty\leq 5$, in other words, the maximal absolute value of
  the coefficients of $a$ with respect to $x,y,z$ are no more than 5;

\item $d_i = p_i\cdot\sigma_x^{\xi}(p_i)$ with $p_1=P_1(\xi y-\zeta
  x,\xi z+\zeta x)$ and $p_2=P_2(\zeta x+\xi y+2\xi z)$ for two
  nonzero integers $\xi,\zeta$ and two integer polynomials
  $P_1(y,z)\in \set Z[y,z]$, $P_2(z)\in \set Z[z]$, both of which have
  total degree $n>0$ and max-norm no more than 5.
\end{itemize}

For a selection of random rational functions of this type for
different choices of $(m,n,\xi,\zeta)$, Table~\ref{TAB:test} collects
the timings of four variants of the algorithm {\bf ReductionCT} from
Section~\ref{SEC:rct}. For the column RCT$_{1}$, we computed both the
telescoper and the certificate, and for the column RCT$_{2}$ only the
telescoper is computed. The difference between these two variants
mainly lies in the time used to bring the certificate to a common
denominator. When it is acceptable to keep the certificate as an
unnormalized linear combination of rational functions, the timings are
virtually the same as for RCT$_2$. For columns RCTLM$_1$ and
RCTLM$_2$, we perform the same functionality as RCT$_1$ and RCT$_2$
but using the LCLM version of the algorithm {\bf ReductionCT}. Note
that the computation of the least common left multiples therein was
accomplished by the built-in Maple command {\sf OreTools[LCM]['left']}.
We remark that the performance of the LCLM version of the algorithm
{\bf ReductionCT} deteriorates for larger examples, especially when there
are many shift equivalence classes in the denominator of the input
rational function or the order of a minimal telescoper is relatively high.
\begin{table}[ht]
  \centering
  \begin{tabular}{l|rr|rr|c}
    $(m,n,\xi,\zeta)$ & RCT$_1$ & RCT$_2$
    & RCTLM$_1$ & RCTLM$_2$ & order\\\hline
    (1, 1, 1, 1) & 0.196 & 0.098 & 0.220 & 0.110 & 1\\
    (1, 1, 1, 5) & 7.319 & 0.123 & 9.483 & 0.123 & 1\\
    (1, 1, 1, 9) & 105.548 & 0.121 & 104.514 & 0.125 & 1\\
    (1, 1, 1, 13) & 2586.295 & 0.136 & 3078.043 & 0.126 & 1\\
    (1, 1, 1, 3) & 0.574 & 0.097 & 0.712 & 0.104 & 1\\
    (1, 2, 1, 3) & 17.812 & 0.256 & 17.299 & 0.263 & 1\\
    (1, 3, 1, 3) & 266.206 & 1.999 & 220.209 & 1.997 & 1\\
    (1, 4, 1, 3) & 2838.827 & 37.358 & 3039.199 & 30.547 & 1\\
    (1, 5, 1, 3) & 19403.916 & 1074.295 & 18309.000 & 1119.393 & 1\\
    (2, 3, 1, 3) & 31678.706 & 2.540 & 15825.876 & 2.224 & 3\\
    (3, 3, 1, 3) & 44243.254 & 5.378 & 16869.097 & 4.295 & 3\\
    (3, 2, 1, 3) & 710.810 & 0.492 & 670.501 & 0.487 & 3\\
    (3, 2, 2, 3) & 1314.809 & 0.701 & 941.009 & 0.756 & 6\\
    (3, 2, 4, 3) & 1558.440 & 1.525 & 1121.624 & 1.550 & 12\\
    (3, 2, 8, 3) & 1878.424 & 4.215 & 986.017 & 4.245 & 24\\
    (3, 2, 16, 3) & 2800.050 & 21.136 & 1317.603 & 38.504 & 48\\\hline
  \end{tabular}
  \caption{Timings for four variants of the algorithm
    {\bf ReductionCT}.}\label{TAB:test}
\end{table}

We have also compared our procedures with the two Mathematica packages:
{\sf HolonomicFunctions} by Koutschan \cite{Kout2010b} and
{\sf MultiSum}\footnote{We thank the anonymous referee for bringing this package to our attention.}
by Wegschaider (substantially improved by Riese) \cite{Wegs1997,LPR2002}.
The {\sf HolonomicFunctions}, to our best knowledge,
is the most comprehensive implementation in terms of creative telescoping
for holonomic functions (cf.\ \cite[\S 2.2]{Kout2009}) in more than two variables.
There are two commands available in the package for our purpose.
One is called {\bf CreativeTelescoping}, which implements Chyzak's algorithm
\cite{Chyz2000} for single sums and can be applied iteratively to compute
telescopers for trivariate rational functions. The other is called
{\bf FindCreativeTelescoping}, which is based on Koutschan's heuristic
approach \cite{Kout2010a} and constructs the telescoper directly by
guessing the denominators of the certificate, as well as their numerator
degrees, and solving a linear system. The {\sf MultiSum} extends the
multivariate version of \lq\lq Sister Celine's technique\rq\rq\ developed
by Wilf and Zeilberger \cite{WiZe1992a}. The available command in the
package is called {\bf FindRecurrence}, which finds a telescoper and a
corresponding certificate for a given summand only if the structure set,
which is usually not known in advance, is chosen in a clever way. 
The idea employed in the package is to use random parameter substitutions
to quickly rule out useless structure sets, which however requires a priori bounds 
for the shifts involved (see \cite{LPR2002} for further details). 
We remark that\footnote{We thank the anonymous referee for pointing this out.} 
it would be interesting to see in the future if our fully automatic method could 
provide these extra bounds also automatically, and then the combination of 
the two methods might yield even a new fully automatic (and efficient) method.

Experiments suggest a better performance of our algorithm.
For example, for the rational function
\begin{align*}
f = \frac{4x + 2}{(45x + 5y+10z + 47)(45x + 5y+10z + 2)(63x - 5y+2z + 58)(63x - 5y+2z - 5)}
\end{align*}
which was constructed using \eqref{EQ:test} with parameter $(m,n,\xi,\zeta) = (1,1,1,9)$,
our algorithm found a minimal telescoper for $f$ along with its corresponding
certificate in about 3 minutes; while the command {\bf FindRecurrence},
along with a priori bounds $1,9,9$ for the shifts in $x,y,z$, respectively,
accomplished the same job using about 7 minutes, the command
{\bf CreativeTelescoping} took about 4 hours, and the command
{\bf FindCreativeTelescoping} did not finish in reasonable time, which happens
because the guessed denominators are wrong/insufficient, and therefore the
command finds nothing and runs forever. The same phenomenon was observed
for larger examples.

\section{Conclusion and future work}\label{SEC:conclu}
In this paper, we presented a new creative telescoping algorithm for
the class of trivariate rational functions.  The procedure is based on
a bivariate extension of Abramov's reduction method initiated in
\cite{Abra1975}. Our algorithm finds a minimal telescoper for a given
trivariate rational function without also needing to compute an
associated certificate. A Maple implementation indicates the
efficiency of our algorithm. As a next step, we are going to
investigate the theoretical complexity of our algorithm to see if it
matches with the practical performance, something briefly alluded to
in the introduction.

We are interested in the more general and important problem of computing
hypergeometric multiple summations or proving identities which involve
such summations. A function $f(x,y_1,\dots, y_n)$ is called a
multivariate {\em hypergeometric term} if the quotients
\[\frac{f(x+1,y_1,\dots, y_n)}{f(x,y_1,\dots,y_n)},
\frac{f(x,y_1+1,\dots, y_n)}{f(x,y_1,\dots,y_n)}, \dots,
\frac{f(x,y_1,\dots, y_n+1)}{f(x,y_1,\dots,y_n)}\]
are all rational functions in $x,y_1,\dots, y_n$.  The problem of
hypergeometric multiple summations tends to appear more often than the
rational case, particularly in combinatorics \cite{AnPa1993,BLS2017},
and it is also more challenging.

Since a large percent of hypergeometric terms falls into the class of
holonomic functions, the problem of hypergeometric multiple summations
can also be considered in a more general framework of multivariate
holonomic functions.  In this context, several creative telescoping
approaches have already been developed in
\cite{Zeil1990b,Taka1990,ChSa1998,Chyz2000,Kout2010a,BRS2018}.  The algorithms
in the first three papers are based on elimination and suffer from the
disadvantage of inefficiency in practice. The algorithm in
\cite{Chyz2000}, also known as Chyzak's algorithm, deals with single
sums (and single integrals) and can only be used to solve multiple
ones in an iterative manner. A fast but heuristic approach was given in
\cite{Kout2010a} in order to eliminate the bottleneck in Chyzak's
algorithm of solving a coupled first-order system.  This approach
generalizes to multiple sums (and multiple integrals).  We refer to
\cite{Kout2009} for a detailed and excellent exposition of these
approaches.
The work in \cite{BRS2018} describes even a general framework that unities
the difference ring and the holonomic approach.
We remark that all these approaches find the telescoper
and the certificate simultaneously, with the exception of Takayama's
algorithm in \cite{Taka1990} where natural boundaries have to be
assured a priori.
Note also that holonomicity is a sufficient but not necessary
condition for the applicability of creative telescoping applied to
hypergeometric terms (cf.~\cite{Abra2003,CHLW2016}).

Restricted to the hypergeometric setting, partial solutions for the
problem of multiple summations were proposed in \cite{CHM2006} and
\cite{BLS2017}. In the former paper, the authors presented a heuristic
method to find telescopers for trivariate hypergeometric terms,
through which they also managed to prove certain famous hypergeometric
double summation identities. In the latter paper, the authors mainly
focused on a subclass of hypergeometric summations -- multiple
binomial sums. They first showed that the generating function of a
given multiple binomial sum is always the diagonal of a rational
function and vice versa. They then constructed a differential equation
for the diagonal by a reduction-based telescoping approach. Finally
the differential equation is translated back into a recurrence
relation satisfied by the given binomial sum. In the future, we hope
to explore this topic further and aim at developing a complete
reduction-based telescoping algorithm for hypergeometric terms in
three or more variables.

\section*{Acknowledgments}
We would like to express our gratitude to Christoph Koutschan for useful instructions
and insightful remarks on his package.
We also would like to thank the anonymous referee for many helpful
and constructive suggestions.
Most of the work presented in this paper was carried out while
H.\ Huang was a postdoctoral fellow at the University of Waterloo.
S.\ Chen was partially supported by the NSFC grants (No.\ 11871067, 12288201) and
the Fund of the Youth Innovation Promotion Association, CAS (2018001).
Q.-H.\ Hou was supported by the NSFC grant (No.\ 11921001).
H.\ Huang and G.\ Labahn were supported by the Natural Sciences and
Engineering Research Council (NSERC) Canada (No. NSERC RGPIN-2020-04276). H.\ Huang was also
supported by the NSFC grant (No.\ 12101105) and
the Fundamental Research Funds for the Central Universities (No. DUT20RC(3)073).
R.-H.\ Wang was supported by the NSFC grants (No.\ 12101449, 11871067) and 
the Natural Science Foundation of Tianjin, China (No.\ 19JCQNJC14500).

\bibliographystyle{plain}

\begin{thebibliography}{10}

\bibitem{Abra1975}
Sergei~A. Abramov.
\newblock The rational component of the solution of a first-order linear
  recurrence relation with a rational right side.
\newblock {\em USSR Comput. Math. Math. Phys.}, 15(4):216--221, 1975.

\bibitem{Abra1989}
Sergei~A. Abramov.
\newblock Rational solutions of linear differential and difference equations
  with polynomial coefficients.
\newblock {\em USSR Comput. Math. Math. Phys.}, 29(6):7--12, 1989.
\newblock Transl. from \v{Z}h. {V}y\v{c}isl. {M}at. i {M}at. {F}iz. 29, pp.
  1611-1620, 1989.

\bibitem{Abra1995a}
Sergei~A. Abramov.
\newblock Indefinite sums of rational functions.
\newblock In {\em Proceedings of {ISSAC}'95}, pages 303--308. ACM, New York,
  1995.

\bibitem{Abra1995b}
Sergei~A. Abramov.
\newblock Rational solutions of linear difference and $q$-difference equations
  with polynomial coefficients.
\newblock In {\em Proceedings of {ISSAC}'95}, pages 285--289. ACM, New York,
  1995.

\bibitem{Abra2003}
Sergei~A. Abramov.
\newblock When does {Z}eilberger's algorithm succeed?
\newblock {\em Adv. in Appl. Math.}, 30(3):424--441, 2003.

\bibitem{ABPS2021}
Sergei~A. Abramov, Manuel Bronstein, Marko Petkov\v{s}ek, and Carsten
  Schneider.
\newblock On rational and hypergeometric solutions of linear ordinary
  difference equations in {$\Pi\Sigma^\ast$}-field extensions.
\newblock {\em J. Symbolic Comput.}, 107:23--66, 2021.

\bibitem{AbLe2002}
Sergei~A. Abramov and Ha~Q. Le.
\newblock A criterion for the applicability of {Z}eilberger's algorithm to
  rational functions.
\newblock {\em Discrete Math.}, 259(1-3):1--17, 2002.

\bibitem{ALL2005}
Sergei~A. Abramov, Ha~Q. Le, and Ziming Li.
\newblock Univariate {O}re polynomial rings in computer algebra.
\newblock {\em J. Math. Sci.}, 131(5):5885--5903, 2005.

\bibitem{APP1998}
Sergei~A. Abramov, Peter Paule, and Marko Petkov{\v{s}}ek.
\newblock {$q$}-{H}ypergeometric solutions of {$q$}-difference equations.
\newblock {\em Discrete Math.}, 180(1-3):3--22, 1998.

\bibitem{AbPe2001a}
Sergei~A. Abramov and Marko Petkov{\v{s}}ek.
\newblock Minimal decomposition of indefinite hypergeometric sums.
\newblock In {\em {P}roceedings of {ISSAC}'01}, pages 7--14. ACM, New York,
  2001.

\bibitem{AnPa1993}
George~E. Andrews and Peter Paule.
\newblock Some questions concerning computer-generated proofs of a binomial
  double-sum identity.
\newblock {\em J. Symbolic Comput.}, 16(2):147--151, 1993.

\bibitem{BaPe1999}
Andrej Bauer and Marko Petkov\v{s}ek.
\newblock Multibasic and mixed hypergeometric {G}osper-type algorithms.
\newblock {\em J. Symbolic Comput.}, 28(4-5):711--736, 1999.

\bibitem{BRS2018}
Johannes Bl\"{u}mlein, Mark Round, and Carsten Schneider.
\newblock Refined holonomic summation algorithms in particle physics.
\newblock In {\em Advances in computer algebra}, volume 226 of {\em Springer
  Proc. Math. Stat.}, pages 51--91. Springer, Cham, 2018.

\bibitem{BCCL2010}
Alin Bostan, Shaoshi Chen, Fr{\'e}d{\'e}ric Chyzak, and Ziming Li.
\newblock Complexity of creative telescoping for bivariate rational functions.
\newblock In {\em {P}roceedings of {ISSAC}'10}, pages 203--210. ACM, New York,
  2010.

\bibitem{BCCLX2013}
Alin Bostan, Shaoshi Chen, Fr{\'e}d{\'e}ric Chyzak, Ziming Li, and Guoce Xin.
\newblock Hermite reduction and creative telescoping for hyperexponential
  functions.
\newblock In {\em {P}roceedings of {ISSAC}'13}, pages 77--84. ACM, New York,
  2013.

\bibitem{BCLS2018}
Alin Bostan, Fr{\'e}d{\'e}ric Chyzak, Pierre Lairez, and Bruno Salvy.
\newblock Generalized {H}ermite reduction, creative telescoping and definite
  integration of {D}-finite functions.
\newblock In {\em {P}roceedings of {ISSAC}'18}, pages 95--102. ACM, New York,
  2018.

\bibitem{BLS2013}
Alin Bostan, Pierre Lairez, and Bruno Salvy.
\newblock Creative telescoping for rational functions using the
  {G}riffiths-{D}work method.
\newblock In {\em {P}roceedings of {ISSAC}'13}, pages 93--100. ACM, New York,
  2013.

\bibitem{BLS2017}
Alin Bostan, Pierre Lairez, and Bruno Salvy.
\newblock Multiple binomial sums.
\newblock {\em J. Symbolic Comput.}, 80(part 2):351--386, 2017.

\bibitem{CvHL2010}
Yongjae Cha, Mark \Hoeij{van Hoeij}, and Giles Levy.
\newblock Solving recurrence relations using local invariants.
\newblock In {\em Proceedings of {ISSAC}'10}, pages 303--309. ACM, New York,
  2010.

\bibitem{CDWZ2021}
Shaoshi Chen, Lixin Du, Rong-Hua Wang, and Chaochao Zhu.
\newblock On the existence of telescopers for rational functions in three
  variables.
\newblock {\em J. Symbolic Comput.}, 104:494--522, 2021.

\bibitem{CDZ2019}
Shaoshi Chen, Lixin Du, and Chaochao Zhu.
\newblock Existence problem of telescopers for rational functions in three
  variables: the mixed cases.
\newblock In {\em Proceedings of {ISSAC}'19}, pages 82--89. ACM, New York,
  2019.

\bibitem{CvHKK2018}
Shaoshi Chen, Mark \Hoeij{van Hoeij}, Manuel Kauers, and Christoph Koutschan.
\newblock Reduction-based creative telescoping for fuchsian {D}-finite
  functions.
\newblock {\em J. Symbolic Comput.}, 85:108--127, 2018.

\bibitem{CHLW2016}
Shaoshi Chen, Qing-Hu Hou, George Labahn, and Rong-Hua Wang.
\newblock Existence problem of telescopers: beyond the bivariate case.
\newblock In {\em Proceedings of {ISSAC}'16}, pages 167--174. ACM, New York,
  2016.

\bibitem{CHKL2015}
Shaoshi Chen, Hui Huang, Manuel Kauers, and Ziming Li.
\newblock A modified {A}bramov-{P}etkov\v sek reduction and creative
  telescoping for hypergeometric terms.
\newblock In {\em Proceedings of {ISSAC}'15}, pages 117--124. ACM, New York,
  2015.

\bibitem{CKK2016}
Shaoshi Chen, Manuel Kauers, and Christoph Koutschan.
\newblock Reduction-based creative telescoping for algebraic functions.
\newblock In {\em Proceedings of {ISSAC}'16}, pages 175--182. ACM, New York,
  2016.

\bibitem{ChSi2014}
Shaoshi Chen and Michael~F. Singer.
\newblock On the summability of bivariate rational functions.
\newblock {\em J. Algebra}, 409:320--343, 2014.

\bibitem{CHM2006}
William Y.~C. Chen, Qing-Hu Hou, and Yan-Ping Mu.
\newblock A telescoping method for double summations.
\newblock {\em J. Comput. Appl. Math.}, 196(2):553--566, 2006.

\bibitem{Chyz2000}
Fr{\'e}d{\'e}ric Chyzak.
\newblock An extension of {Z}eilberger's fast algorithm to general holonomic
  functions.
\newblock {\em Discrete Math.}, 217(1-3):115--134, 2000.
\newblock Formal power series and algebraic combinatorics (Vienna, 1997).

\bibitem{ChSa1998}
Fr{\'e}d{\'e}ric Chyzak and Bruno Salvy.
\newblock Non-commutative elimination in {O}re algebras proves multivariate
  identities.
\newblock {\em J. Symbolic Comput.}, 26(2):187--227, 1998.

\bibitem{ClvH2006}
Thomas Cluzeau and Mark \Hoeij{van Hoeij}.
\newblock Computing hypergeometric solutions of linear recurrence equations.
\newblock {\em Appl. Algebra Engrg. Comm. Comput.}, 17(2):83--115, 2006.

\bibitem{GGSZ2003}
J{\"u}rgen Gerhard, Mark Giesbrecht, Arne Storjohann, and Eugene~V. Zima.
\newblock Shiftless decomposition and polynomial-time rational summation.
\newblock In {\em Proceedings of {ISSAC}'03}, pages 119--126. ACM, New York,
  2003.

\bibitem{GHLZ2019}
Mark Giesbrecht, Hui Huang, George Labahn, and Eugene Zima.
\newblock Efficient integer-linear decomposition of multivariate polynomials.
\newblock In {\em Proceedings of {ISSAC}'19}, pages 171--178. ACM, New York,
  2019.

\bibitem{vHoe2017}
Mark \Hoeij{van Hoeij}.
\newblock Closed form solutions for linear differential and difference
  equations.
\newblock In {\em Proceedings of {ISSAC}'17}, pages 3--4. ACM, New York, 2016.

\bibitem{vdHo2021}
Joris \Hoeven{van der Hoeven}.
\newblock Constructing reductions for creative telescoping: the general
  differentially finite case.
\newblock {\em Appl. Algebra Engrg. Comm. Comput.}, 32(5):575--602, 2021.

\bibitem{HoWa2015}
Qing-Hu Hou and Rong-Hua Wang.
\newblock An algorithm for deciding the summability of bivariate rational
  functions.
\newblock {\em Adv. in Appl. Math.}, 64:31--49, 2015.

\bibitem{Huan2016}
Hui Huang.
\newblock New bounds for hypergeometric creative telescoping.
\newblock In {\em Proceedings of {ISSAC}'16}, pages 279--286. ACM, New York,
  2016.

\bibitem{Kout2009}
Christoph Koutschan.
\newblock {\em {Advanced applications of the holonomic systems approach}}.
\newblock PhD thesis, RISC-Linz, Johannes Kepler University, September 2009.

\bibitem{Kout2010a}
Christoph Koutschan.
\newblock A fast approach to creative telescoping.
\newblock {\em Math. Comput. Sci.}, 4(2-3):259--266, 2010.

\bibitem{Kout2010b}
Christoph Koutschan.
\newblock Holonomic{F}unctions (user's guide), 2010.
\newblock Technical Report 10-01, RISC Report Series, Johannes Kepler
  University, Linz, Austria.

\bibitem{Le2003a}
Ha~Q. Le.
\newblock A direct algorithm to construct the minimal {$Z$}-pairs for rational
  functions.
\newblock {\em Adv. in Appl. Math.}, 30(1-2):137--159, 2003.

\bibitem{LiZh2013}
Ziming Li and Yi~Zhang.
\newblock An algorithm for decomposing multivariate hypergeometric terms, 2013.
\newblock A contributed talk in CM'13.

\bibitem{LPR2002}
Russell Lyons, Peter Paule, and Axel Riese.
\newblock A computer proof of a series evaluation in terms of harmonic numbers.
\newblock {\em Appl. Algebra Engrg. Comm. Comput.}, 13(4):327--333, 2002.

\bibitem{Paul1995}
Peter Paule.
\newblock Greatest factorial factorization and symbolic summation.
\newblock {\em J. Symbolic Comput.}, 20(3):235--268, 1995.

\bibitem{Petk1992}
Marko Petkov{\v{s}}ek.
\newblock Hypergeometric solutions of linear recurrences with polynomial
  coefficients.
\newblock {\em J. Symbolic Comput.}, 14(2-3):243--264, 1992.

\bibitem{Schn2007}
Carsten Schneider.
\newblock Simplifying sums in {$\Pi\Sigma^*$}-extensions.
\newblock {\em J. Algebra Appl.}, 6(3):415--441, 2007.

\bibitem{Taka1990}
Nobuki Takayama.
\newblock An algorithm of constructing the integral of a module --an infinite
  dimensional analog of {G}r\"{o}bner basis.
\newblock In {\em Proceedings of {ISSAC}'90}, pages 206--211. ACM, New York,
  1990.

\bibitem{Wegs1997}
Kurt Wegschaider.
\newblock Computer {G}enerated {P}roofs of {B}inomial {M}ulti-{S}um
  {I}dentities.
\newblock Diploma Thesis, RISC, J. Kepler University, Linz, May 1997.

\bibitem{WiZe1992a}
Herbert~S. Wilf and Doron Zeilberger.
\newblock An algorithmic proof theory for hypergeometric (ordinary and
  ``{$q$}'') multisum/integral identities.
\newblock {\em Invent. Math.}, 108(3):575--633, 1992.

\bibitem{Zeil1990b}
Doron Zeilberger.
\newblock A holonomic systems approach to special functions identities.
\newblock {\em J. Comput. Appl. Math.}, 32(3):321--368, 1990.

\bibitem{Zeil1991}
Doron Zeilberger.
\newblock The method of creative telescoping.
\newblock {\em J. Symbolic Comput.}, 11(3):195--204, 1991.

\end{thebibliography}

\newcommand{\Gathen}{\relax}\newcommand{\Hoeij}{\relax}\newcommand{\Hoeven}{\relax}\def\cprime{$'$}
  \def\cprime{$'$} \def\cprime{$'$} \def\cprime{$'$} \def\cprime{$'$}
  \def\cprime{$'$} \def\cprime{$'$} \def\cprime{$'$} \def\cprime{$'$}
  \def\polhk#1{\setbox0=\hbox{#1}{\ooalign{\hidewidth
  \lower1.5ex\hbox{`}\hidewidth\crcr\unhbox0}}} \def\cprime{$'$}

\end{document}